\newcommand{\real}{\ensuremath{\mathbb{R}}}
\newcommand{\nat}{\ensuremath{\mathbb{N}}}
\newcommand{\Zge}{\ensuremath{\mathbb{Z}_{\ge 0}}}
\newcommand{\Ba}{\ensuremath{\mathbb{B}}}
\newcommand{\Bu}{{B\"uchi \xspace}}
\newcommand{\Au}{\ensuremath{\mathscr{A}}}
\newcommand{\Oh}{\ensuremath{\mathscr{O}}}
\newcommand{\BA}{\ensuremath{\textup{BA}}}
\newcommand{\DBAC}{\ensuremath{D_\textup{BA}^\textup{C}}}
\newcommand{\GBAC}{\ensuremath{G_\textup{BA}^\textup{C}}}
\newcommand{\deltac}{\ensuremath{\delta^\textup{C}}}
\newcommand{\OC}{\ensuremath{O^\textup{C}}}
\newcommand{\HyBAC}{\ensuremath{\mathcal{H}_\textup{BA}^\textup{C}}}
\newcommand{\OBAC}{\ensuremath{\mathcal{O}_\textup{BA}^\textup{C}}}
\newcommand{\VBA}{\ensuremath{V_\textup{BA}}}
\newcommand{\Vu}{\ensuremath{V_\textup{u}}}
\newcommand{\Vl}{\ensuremath{V_\textup{l}}}
\newcommand{\muBA}{\ensuremath{\mu_\textup{BA}}}
\renewcommand{\sf}{\ensuremath{s_\textup{f}}}
\newcommand{\Sf}{\ensuremath{S_\textup{f}}}
\newcommand{\Sso}{\ensuremath{\smat{s\\o}}}
\newcommand{\Ind}{\ensuremath{\mathbb{I}}}
\newcommand{\Ap}{\ensuremath{\mathbf{A}}}
\newcommand{\Bp}{\ensuremath{\mathbf{B}}}
\newcommand{\Cp}{\ensuremath{\mathbf{C}}}
\newcommand{\Lp}{\ensuremath{\mathbf{L}}}
\newcommand{\Kp}{\ensuremath{\mathbf{K}}}
\newcommand{\Fp}{\ensuremath{\mathbf{F}}}
\newcommand{\gp}{\ensuremath{\mathbf{g}}}
\newcommand{\Pp}{\ensuremath{\mathbf{P}}}
\newcommand{\Qp}{\ensuremath{\mathbf{Q}}}
\newcommand{\smat}[1]{\ensuremath{\left[\begin{smallmatrix}#1\end{smallmatrix}\right]}}
\newcommand{\bmat}[1]{\ensuremath{\begin{bmatrix}#1\end{bmatrix}}}
\newcommand{\Hy}{\ensuremath{\mathcal{H}}}
\newcommand{\HyR}[1]{\ensuremath{\left.\mathcal{H}\right|_{#1}}}
\newcommand{\RnmO}{\ensuremath{\real^n \backslash \O}}
\newcommand{\So}{\ensuremath{\mathcal{S}}}
\newcommand{\Cr}{\ensuremath{{C}_\textup{r}}}
\newcommand{\Dr}{\ensuremath{{D}_\textup{r}}}
\newcommand{\phir}{\ensuremath{{\phi}_\textup{r}}}
\newcommand{\lambdac}{\ensuremath{\lambda_\textup{c}}}
\newcommand{\lambdad}{\ensuremath{\lambda_\textup{d}}}
\newcommand{\dMAX}{\ensuremath{d_\textup{MAX}}}
\newcommand{\wmin}{\ensuremath{w_\textup{min}}}
\newcommand{\xh}{\ensuremath{x_\textup{H}}}
\newcommand{\xhDot}{\ensuremath{\dot{x}_\textup{H}}}
\newcommand{\Ch}{\ensuremath{C_\textup{H}}}
\newcommand{\Dh}{\ensuremath{D_\textup{H}}}
\newcommand{\fh}{\ensuremath{f_\textup{H}}}
\newcommand{\Gh}{\ensuremath{G_\textup{H}}}
\newcommand{\muh}{\ensuremath{\mu_\textup{H}}}
\newcommand{\Vh}{\ensuremath{V_\textup{H}}}
\renewcommand{\O}{\ensuremath{\mathcal{O}}}
\renewcommand{\Oh}{\ensuremath{\mathcal{O}_\textup{H}}}
\newcommand{\df}{\ensuremath{\mathfrak{d}}}
\newcommand{\toto}{\ensuremath{\rightrightarrows}}
\newcommand{\UGR}{UGR\xspace}
\newcommand{\UlyGR}{UGR\xspace}
\DeclareMathOperator{\dom}{dom}
\DeclareMathOperator{\gph}{gph}
\newtheorem{assumption}{Assumption}
\newtheorem{proposition}{Proposition}
\newtheorem{lemma}{Lemma}
\newtheorem{definition}{Definition}
\newenvironment{proof}{\noindent\textit{Proof.}}{\hfill $\square$}
\newcounter{myremcounter}
\newenvironment{remark}{
\refstepcounter{myremcounter}
\noindent\textit{Remark \themyremcounter:}}{\hfill $\triangle$}
\begin{document}
\title{Satisfaction of linear temporal logic specifications\\
through recurrence tools for hybrid systems}
\author{Andrea~Bisoffi, Dimos V. Dimarogonas
\thanks{This work was supported in part by the Swedish Research Council (VR), the Swedish Foundation for Strategic Research (SSF), the Knut and Alice Wallenberg Foundation (KAW), the SRA ICT TNG project STaRT, and the European Research Council (ERC) through ERC StG BUCOPHSYS.}
\thanks{Andrea~Bisoffi, Dimos V. Dimarogonas are with the Division of Decision and Control Systems, KTH Royal Institute of Technology, SE-100 44 Stockholm, Sweden \texttt{\{bisoffi, dimos\}@kth.se}}
}

\maketitle

\begin{abstract}
In this work we formulate the problem of satisfying a linear temporal logic formula on a linear plant with output feedback, through a recent hybrid systems formalism. We relate this problem to the notion of recurrence introduced for the considered formalism, and we then extend Lyapunov-like conditions for recurrence of an open, unbounded set. One of the proposed relaxed conditions allows certifying recurrence of a suitable set, and this guarantees that the high-level evolution of the plant satisfies the formula, without relying on discretizations of the plant. Simulations illustrate the proposed approach.
\end{abstract}

\IEEEpeerreviewmaketitle

\begin{textblock*}{\textwidth}(17mm,4mm)%
\small\bfseries{\color{red}\noindent\textcopyright \textcopyright 2020 IEEE. Personal use of this material is permitted. Permission from IEEE must be obtained for all other uses, in any current or future media, including reprinting/republishing this material for advertising or promotional purposes, creating new collective works, for resale or redistribution to servers or lists, or reuse of any copyrighted component of this work in other works.}
\end{textblock*}
\section{Introduction}

Linear temporal logic (LTL, see, e.g., \cite{baier2008principles,belta2017formal}) provides a tool to formulate richly expressive control specifications for continuous-time plants, such as high-level tasks for multi-robot systems. An LTL formula can be equivalently translated into a \Bu automaton (BA) \cite[Thm.~5.41]{baier2008principles}, and we thus consider the equivalent BA instead of the LTL formula throughout this work. Then, the combination of the BA and the continuous-time plant can be appealingly addressed through a hybrid systems framework \cite{goebel2012hybrid}.

As a key property, a word of atomic propositions satisfies the LTL formula if the sequence of states induced by the word in the corresponding BA visits some accepting states infinitely often. This property has an intriguing relation to the recurrence property for hybrid systems in~\cite{subbaraman2016equivalence}. The study of recurrence for hybrid systems was initiated by its relevance in the case of stochastic hybrid systems~\cite{teel2013lyapunov}. It was specialized in~\cite{subbaraman2016equivalence} for the nonstochastic case where global recurrence of an open, bounded set is shown to be equivalent to the existence of a smooth Lyapunov-like function relative to that set. We emphasize that recurrence of a set does not entail forward invariance or stability of such set, but is an attractivity-like property. Together with completeness of solutions, it matches well the acceptance condition of the LTL formula for the BA, as shown in this paper.
A similar notion of recurrence was also studied in~\cite{fiacchini2018stabilization} (recurrent stabilizability of language-constrained discrete-time linear switching systems).

In this work we show the relevance of recurrence for hybrid systems in the context of LTL specifications on a linear continuous-time plant with output feedback where we want the LTL specification to enforce in an automated way which regions of interest should be reached and in which order. 
First, the framework in~\cite{goebel2012hybrid} is suitable here because, at the same time, \textit{(i)}~it allows leveraging computationally efficient control laws for the continuous-time part kept as is, \textit{(ii)}~it provides Lyapunov-like tools for sets such as the considered regions of interest, \textit{(iii)}~it captures set-valued dynamics such as the nondeterministic BA corresponding to the LTL formula.
Moreover, the relevant case of output feedback (instead of full state feedback) motivates us to extend the results given for open, bounded sets in~\cite{subbaraman2016equivalence}. 
Indeed, regions of interest defined in the \emph{output} variables induce open, \emph{unbounded} sets in the state variables. We then show that for open, unbounded sets, the sufficiency of the Lyapunov-like result~\cite[Thm.~5]{subbaraman2016equivalence} for recurrence still holds (Lemma~\ref{lem:recOnotBounded}) and we provide a relaxed Lyapunov-like condition (Proposition~\ref{prop:recOnotBoundedRelaxed}) needed for the proof of our main result (Proposition~\ref{prop:OhRec}). 
Our designed hybrid scheme and the certificates of recurrence in terms of hybrid Lyapunov-like functions guarantee then satisfaction of an LTL formula when the designed high-level hybrid scheme is endowed with obstacle avoidance low-level controllers, which we do not pursue here.
Finally, our approach allows leveraging continuous-time control laws for the plant, and is an alternative to approaches which discretize the continuous-time plant into a transition system and use automata-based tools to find a control strategy (see \cite{tabuada2009verification}, \cite{baier2008principles}, \cite{belta2017formal} and references therein). Such approaches suffer from the computational cost induced by a possibly very large discretization of the plant into a transition system, which is avoided here altogether.
Approaches using framework \cite{goebel2012hybrid} for the satisfaction of LTL formulae were presented in~\cite{bisoffi2018hybrid} (for a fragment of LTL, i.e., syntactically co-safe LTL, and state feedback) and in~\cite{han2018sufficient} (sufficient conditions for single temporal operators of LTL formulae).

The main contribution of this paper shows that the satisfaction of an LTL formula can be certified through Lyapunov-like tools, once it is reframed in terms of recurrence of a suitable set. As a second contribution, we nontrivially extend Lyapunov-like sufficient conditions for a case not covered in~\cite{subbaraman2016equivalence}, motivated by output feedback.

The paper is structured as follows. Section~\ref{sec:prel} introduces preliminaries. The relation between the satisfaction of LTL and the recurrence property is in Section~\ref{sec:BA}. Section~\ref{sec:addCT} presents the hybrid dynamics of BA and continuous-time plant, the auxiliary Lyapunov-like conditions and the main result of the satisfaction of LTL in terms of recurrence. Section~\ref{sec:ex} provides a numerical example. All proofs are in the appendix.

\textit{Notation:} 
$\real$, $\real_{\ge 0}$, $\nat$ are the sets of reals, nonnegative reals, nonnegative integers.
For $w_1 \in \real^{n_1}$ and $w_2 \in \real^{n_2}$, $(w_1,w_2):=\smat{w_1^T & w_2^T}^T$. 
For a set-valued mapping $M\colon \real^n \rightrightarrows \real^n$, the domain of $M$ is $\dom M := \{ x \in \real^n  \colon M(x) \neq \emptyset \}$ and its graph is the set $\gph M:= \{(x,y)\in \real^n \times \real^n \colon y \in M(x)\}$. 
$\Zge$ denotes the set of nonnegative integer numbers. 
$\wedge$, $\vee$, $\implies$ denote the logical conjunction, disjunction, implication. 
For $c\in \real^n$ and $r>0$, $\Ba(c,r)$ denotes the closed ball with center $c$ and radius $r$.
For a set $S\subset \real^n$, $S^\circ$, $\partial S$ and $\overline S$ denote its interior, boundary and closure. 
For $v \in \real^n$ and a set $S \subset \real^n$, the indicator function $\Ind_{S}$ is defined as $\Ind_{S}(v) := 1$ if $v \in S$ and as $\Ind_{S}(v) := 0$ if $v \notin S$.

\section{Preliminaries}
\label{sec:prel}

We consider hybrid systems $\Hy$ of the class \cite{goebel2012hybrid}:
\begin{subequations}
\label{eq:hs}
\begin{empheq}[left={\Hy:\empheqlbrace}]{align}
& \dot x  \in F(x), & & x \in C \label{eq:hsFlow} \\
& x^+  \in G(x), & & x \in D, \label{eq:hsJump}
\end{empheq}
\end{subequations}
denoted briefly as $\Hy = (F,C,G,D)$. We make the next mild assumption on $\Hy$.
\begin{assumption}
\label{ass:hbc}
The data $(F,C,G,D)$ of $\Hy$ in~\eqref{eq:hs} satisfy the hybrid basic conditions in~\cite[Assumption~6.5]{goebel2012hybrid}, i.e., 
\begin{itemize}
\item $C\subset \dom F$ and $D\subset \dom G$ are closed sets in $\real^n$;
\item the set-valued mappings $F$ and $G$ have a closed graph and are locally bounded  relative to $C$ and $D$, respectively;
\item $F(x)$ is convex for each $x \in C$.
\end{itemize}
\end{assumption}
A set $E \subset \real_{\ge 0} \times \nat$ is a \emph{hybrid time domain} if it is a union of a finite or infinite sequence of intervals $[t_j,t_{j+1}]\times \{j\}$, with the last interval (if existent) possibly of the form $[t_j,T)$ with $T$ finite or $T=\infty$ \cite[Def.~2.3]{goebel2012hybrid}.
A function $\phi \colon E \to \real^n$ is a \emph{hybrid arc} if $E$ is a hybrid time domain and if for each $j \in \nat$, the function $t \mapsto \phi(t,j)$ is locally absolutely continuous on $\{t \colon (t,j) \in E\}$ \cite[Def.~2.4]{goebel2012hybrid}.
Given a hybrid arc $\phi$, $\dom \phi$ represents its domain, which is a hybrid time domain and for which the operations $\sup_t$, $\sup_j$, $\sup$ are defined in~\cite[p.~27]{goebel2012hybrid}.
A hybrid arc $\phi$ is a \emph{solution to $\Hy$} if: \textit{(i)} $\phi(0,0) \in C \cup D$; \textit{(ii)} for every $j \in \nat$, $\phi(t,j) \in C$ and $\dot \phi(t,j) \in F(\phi(t,j))$ for almost all $t \in I^j:= \{t \colon (t,j) \in \dom \phi \}$; \textit{(iii)} for every $(t,j) \in \dom \phi$ such that $(t,j+1) \in \dom \phi$, $\phi(t,j) \in D$ and $\phi(t,j+1) \in G(\phi(t,j))$ \cite[p.~124]{goebel2012hybrid}. 
A solution is \emph{maximal} if it cannot be extended \cite[Def.~2.7]{goebel2012hybrid}, and \emph{complete} if its domain is unbounded (in the $t$- or $j$-direction) \cite[p.~30]{goebel2012hybrid}.
For a set $U$, $\So_\Hy(U)$ denotes the set of all maximal solutions $\phi$ to $\Hy$ with $\phi(0,0)\in U$.
We define the restriction $\HyR{\Gamma}$ of the hybrid system $\Hy=(F,C,G,D)$ to the set $\Gamma$ as:
\begin{subequations}
\label{eq:hsRestr}
\begin{empheq}[left={\HyR{\Gamma}:\empheqlbrace}]{align}
& \dot x  \in F(x), & & x \in C \cap \Gamma\label{eq:hsFlowRestr} \\
& x^+  \in G(x), & & x \in D \cap \Gamma. \label{eq:hsJumpRestr}
\end{empheq}
\end{subequations}
We recall next \cite[Def.~2]{subbaraman2016equivalence}, where no finite escape times for~\eqref{eq:hsFlow} means that there exist no solutions to~\eqref{eq:hsFlow} escaping to infinity at a finite time.
\begin{definition}[{\hspace{1sp}\cite[Def.~2]{subbaraman2016equivalence}}]
\label{def:GRandUGR}
A set $\O \subset \real^n$ is \emph{uniformly globally recurrent} (UGR) for $\Hy$ if \emph{(i)}~there are no finite escape times for~\eqref{eq:hsFlow}, and \emph{(ii)}~for each compact set $K$, there exists $T > 0$ such that for each solution  $\phi \in \So_\Hy (K)$, either $t +j < T$ for all $(t, j) \in \dom \phi$ or there exists $(t, j) \in \dom \phi$ such that $t + j \le T$ and $\phi(t, j) \in \O$.
\end{definition}
Intuitively speaking, item \textit{(ii)} of Definition~\ref{def:GRandUGR} asks that, uniformly over compact sets, all solutions either stop, or hit  $\O$.

\section{\Bu automaton as a difference inclusion and recurrence}
\label{sec:BA}

A generic linear temporal logic (LTL) formula can be translated into an equivalent nondeterministic \Bu automaton (BA) as follows.
\begin{definition}[{\hspace{1sp}\cite[Defs.~2.5-2.6]{belta2017formal}}]
A (nondeterministic) \emph{\Bu automaton} (BA) is a tuple $\Au = (S, S_0, O, \delta, \Sf)$, where 
$S\subset \Zge$ is a finite set of states taken as nonnegative integers,
$S_0 \subset S$ is a set of initial states,
$O\subset \Zge$ is a finite set of observations taken as nonnegative integers,
$\delta \colon S \times O \toto S$ is a nondeterministic transition function, $\Sf \subset S$ is a set of accepting states. 
The \emph{semantics} of a BA are defined over infinite words of observations in $O^\omega$. 
A run of $\Au$ over an infinite word of observations $w_O = w_O(1)w_O(2)\dots \in O^\omega$ is a sequence $w_S = w_S(1)w_S(2)\dots\in S^\omega$ where $w_S(1) \in S_0$ and $w_S(k + 1) \in \delta(w_S(k),w_O(k))\subset S$ for all $k \ge  1$. 
A word $w_O$ is \emph{accepted} by $\Au$ if there exists at least one run $w_S$ over $w_O$ that visits $\Sf$ infinitely often, i.e., the intersection with $\Sf$ of the states appearing in the run $w_S$ infinitely often is nonempty.
\label{def:BA}
\end{definition}

In a formal-methods setting, the sets $S$ and $O$ are sets of labels and atomic propositions, which can be indexed by nonnegative integers. In Definition~\ref{def:BA}, we directly identify the sets $S$ and $O$ with such indices of their elements, with some abuse of the standard notation.

We make the next assumption that we are given a feasible LTL formula, which corresponds to the existence of at least one accepting state $\sf \in \Sf$ that can be visited infinitely often.
\begin{assumption}
\label{ass:LTLfeas}
For the BA $\Au = (S, S_0, O, \delta, \Sf)$, there exists $\sf \in \Sf$ reachable from some initial state $s_0 \in S_0$ and containing a cyclic path through itself. Without loss of generality, we remove all states in $S$ from which an accepting $\sf$ containing a cyclic path through itself cannot be reached.
\end{assumption}

Let us interpret the BA $\Au$ in Definition~\ref{def:BA} as a discrete-time dynamical system with state $s$ and driven by $o$:
\begin{align}
\label{eq:bareBA}
s^+ & \in \delta(s,o)
\end{align}
where $\delta$ is in Definition~\ref{def:BA}, and is generally set-valued since the BA is nondeterministic, but can be empty, e.g., in the case when there exists no outgoing transition labelled $o$ from the logical state $s$. Then, the only observations that can be effectively taken from $s$ correspond to the set (indexed by $s$)
\begin{equation}
\label{eq:Os}
O_s := \{ o \in O \colon \delta(s,o) \neq \emptyset \}.
\end{equation}
Thanks to Assumption~\ref{ass:LTLfeas}, we have $O_s \neq \emptyset$ for each $s\in S$ (otherwise such $s$ would have been removed).
\begin{subequations}
\label{eq:HyGBA}
The bare evolution of the BA can then be expressed through the state $\chi := \Sso$ as the constrained difference inclusion
\begin{align}
& \chi^+ \in G_\BA( \chi),\qquad \chi \in D_\BA\\
& \,\,\,\,G_\BA ( \chi )\!= \!G_\BA ( \smat{s\\o} )  := \big\{\! \smat{s'\\o'}\! \colon\! s' \in \delta(s,o), o' \in O_{s'}\! \big\}\\
\label{eq:DBA}
&\,\,\,\, D_\BA  := \big\{ \smat{s \\ o } \colon s \in S, o \in O_s \big\}.
\end{align}
\end{subequations}

Let us now introduce shortest-path distances on the BA.
The BA can be seen as a digraph where each $s$ represents a vertex, and the edges from $s$ to any element of $\delta(s,O)$ are labelled by observations in $O$. We compute then for each node $s\in S$ its shortest-path distance $\hat d$ to any other node $\sf\in \Sf$ as
\begin{subequations}
\label{eq:distOnAutomaton}
\begin{equation}
\label{eq:hatDist}
\begin{aligned}
&(s,\sf)\mapsto \hat d(s,\sf)\\
& := \! \begin{cases}
\infty, \text{\hspace{1.5cm} \textit{if} there is no path from $s$ to $\sf$}\\
\begin{minipage}{6.7cm}
minimum number of edges \\
in any path from $s$ to $\sf$, \hspace{1.4cm} \textit{otherwise}
\end{minipage}
\end{cases}
\end{aligned}
\end{equation}
by breadth-first search algorithm \cite[\S 22.2]{cormen2009introduction}. By~\eqref{eq:hatDist}, we define the distance of $s \in S$ to the set of accepting states $\Sf$ as
\begin{equation}
\label{eq:dist}
d(s,\Sf) := \min_{\sf \in \Sf} \hat d(s,\sf),
\end{equation}
\end{subequations}
which is the minimum shortest-path distance from $s$ over the accepting states $\Sf$.\footnote{Whereas $\hat d$ can take $\infty$ as a value, Assumption~\ref{ass:LTLfeas} excludes that $d$ takes $\infty$ as a value, otherwise no accepting state $\sf$ could be reached from $s$ and such $s$ would have been removed from the BA.} Introduce now the set-valued mapping
\begin{equation}
\label{eq:deltac}
\begin{aligned}
\deltac (s& , o) :=  \{ s' \in \delta(s,o) \colon \\
& \big(s\notin \Sf \implies d(s', \Sf) < d(s,\Sf) \big) \\
& \quad \wedge \big(s\in \Sf \implies d(s',\Sf) = \min_{\df \in \delta(s,O)} d(\df,\Sf) \big) \},
\end{aligned}
\end{equation}
whose conditions are interpreted as follows. If $s$ is not an accepting state, $s'\in\delta(s,o)$ is chosen to decrease the distance to $\Sf$. If $s$ is an accepting state, $s' \in \delta(s,o)$ is chosen so to increase the distance to the accepting states as little as possible. These two properties of $\deltac$ are beneficial in the Lyapunov-like conditions for recurrence used in the sequel.

We then further constrain the BA in~\eqref{eq:HyGBA} using $\deltac$.
To this end, define for each $s \in S$ the next subset of $O_s$ as
\begin{equation}
\label{eq:OC_s}
\OC_s:= \{ o \in O_s \colon \deltac (s,o) \neq \emptyset \},
\end{equation}
which has the next relevant property.

\begin{lemma}
\label{lem:OCs nonempty}
Under Assumption~\ref{ass:LTLfeas}, $\OC_s \neq \emptyset$ for each $s \in S$.
\end{lemma}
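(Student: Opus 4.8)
The plan is to show that for each state $s \in S$, the set $\OC_s$ defined in~\eqref{eq:OC_s} is nonempty, by exhibiting at least one observation $o \in O_s$ for which $\deltac(s,o) \neq \emptyset$. The key observation is that $\deltac(s,o)$ simply selects from $\delta(s,o)$ those successors $s'$ that satisfy a distance-decreasing (or distance-minimizing) condition, so I must verify that such a successor always exists somewhere in $\delta(s,O)$. I would split into the two cases dictated by the definition of $\deltac$ in~\eqref{eq:deltac}: whether $s \notin \Sf$ or $s \in \Sf$.

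First, consider $s \notin \Sf$. By Assumption~\ref{ass:LTLfeas} every state that survives in the BA can reach some accepting state containing a cyclic path, so $d(s,\Sf)$ is finite and, since $s \notin \Sf$, strictly positive. By the definition of the shortest-path distance in~\eqref{eq:distOnAutomaton}, a shortest path from $s$ to the nearest accepting state has a first edge leading to some $s'$ with $\hat d(s',\sf) = \hat d(s,\sf) - 1$ for the minimizing $\sf$; this forces $d(s',\Sf) \le d(s,\Sf) - 1 < d(s,\Sf)$. That edge is labelled by some observation $o \in O$, and since the edge exists we have $s' \in \delta(s,o)$, hence $o \in O_s$ by~\eqref{eq:Os}. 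This $s'$ satisfies the distance-decreasing condition of $\deltac$, so $\deltac(s,o) \ni s'$ is nonempty, giving $o \in \OC_s$.

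Second, consider $s \in \Sf$. Here the relevant condition in~\eqref{eq:deltac} is that $s'$ achieves the minimum $\min_{\df \in \delta(s,O)} d(\df,\Sf)$ over all successors. Since $O_s \neq \emptyset$ (guaranteed after the removals of Assumption~\ref{ass:LTLfeas}), the successor set $\delta(s,O)$ is nonempty, and the minimum is attained by some $s^\star \in \delta(s,O)$ simply because $S$ is finite. Picking an observation $o \in O_s$ that generates a transition to such a minimizing $s^\star$ yields $s^\star \in \deltac(s,o)$, so again $\OC_s \neq \emptyset$.

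I expect the main subtlety to be the first case: one must carefully argue that a shortest-path predecessor relationship translates into the existence of an observation $o$ with $s' \in \delta(s,o)$ and a strict decrease in $d(\cdot,\Sf)$. This hinges on relating the edge-labelling of the BA digraph to the transition function $\delta$ and on the fact that $d(s',\Sf)$ minimizes over all accepting states, so one should check that the decrease along the shortest path to the particular nearest $\sf$ indeed lower-bounds the decrease of the set-distance $d$. The case $s \in \Sf$ is comparatively routine, resting only on finiteness of $S$ and nonemptiness of $O_s$.
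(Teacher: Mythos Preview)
Your proposal is correct and follows essentially the same approach as the paper: both case-split on whether $s \in \Sf$ and rely on the elementary shortest-path property that the first edge of a shortest path to $\Sf$ lands on a successor with strictly smaller distance. The only difference is presentational---you argue directly by exhibiting that first edge, whereas the paper proves the $s \notin \Sf$ case by contradiction (assuming every successor $\delta'_i$ satisfies $d(\delta'_i,\Sf) \ge d(s,\Sf)$ and deriving $d(s,\Sf) \ge d(s,\Sf)+1$).
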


Then, the further constrained difference inclusion reads
\begin{subequations}
\label{eq:HyGBAC}
\begin{align}
& \HyBAC :\, \chi^+ \in \GBAC( \chi),\qquad \chi \in \DBAC \label{eq:hsBAC}\\
& \hspace*{1.3cm} \GBAC ( \chi )\! := \big\{\! \smat{s'\\o'}\! \colon\! s' \in \deltac(s,o), o' \in \OC_{s'}\! \big\} \label{eq:GBAC}\\
& \hspace*{1.3cm} \DBAC  := \big\{\! \smat{s \\ o } \!\colon s \in S, o \in \OC_s \big\}. \label{eq:DBAC}
\end{align}
\end{subequations}
which is to be compared to~\eqref{eq:HyGBA}. 
$\HyBAC$ enjoys the next properties.
\begin{lemma}
\label{lem:hbc-completeness-GBAC}
Under Assumption~\ref{ass:LTLfeas}, $\HyBAC$ satisfies Assumption~\ref{ass:hbc} (in particular 
$\DBAC \subset \dom \GBAC$), and
$\GBAC(\DBAC) \subset \DBAC$.
\end{lemma}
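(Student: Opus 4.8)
The plan is to exploit that $\HyBAC$ in~\eqref{eq:HyGBAC} is a purely discrete system with no flow component: its flow set is empty and its flow map plays no role, so the requirements of Assumption~\ref{ass:hbc} on the flow data (closedness of the flow set, closed graph and local boundedness of the flow map, convexity of its values) hold vacuously, and only the jump data $(\GBAC,\DBAC)$ need attention. The decisive structural fact is that, since $S$ and $O$ are finite sets of nonnegative integers, $\DBAC$ in~\eqref{eq:DBAC} is a finite subset of $\Zge\times\Zge\subset\real^2$ and the graph of $\GBAC$ consists of finitely many integer-coordinate points.

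With this in hand, I would first dispatch the topological hypotheses by pure finiteness. Being finite, $\DBAC$ is closed in $\real^2$; the range of $\GBAC$ is contained in the finite set $\{(s',o')\colon s'\in S,\,o'\in O\}$, so $\GBAC$ is (globally, hence locally) bounded relative to $\DBAC$; and $\gph\GBAC$, being a finite set of points, is closed. These are the routine observations that make Assumption~\ref{ass:hbc} essentially automatic here.

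The two remaining inclusions are verified by unpacking definitions. For $\DBAC\subset\dom\GBAC$, take $\chi=(s,o)\in\DBAC$, so that $s\in S$ and $o\in\OC_s$; by~\eqref{eq:OC_s} the latter yields $\deltac(s,o)\neq\emptyset$, and for any $s'\in\deltac(s,o)$ the set $\GBAC(\chi)$ is nonempty provided $\OC_{s'}\neq\emptyset$, which is guaranteed for every $s'\in S$ by Lemma~\ref{lem:OCs nonempty}. For $\GBAC(\DBAC)\subset\DBAC$, any $(s',o')\in\GBAC(\chi)$ satisfies, by~\eqref{eq:GBAC}, $s'\in\deltac(s,o)\subset\delta(s,o)\subset S$ and $o'\in\OC_{s'}$, which is precisely the membership condition~\eqref{eq:DBAC} for $(s',o')\in\DBAC$.

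I do not expect a genuine obstacle: the finiteness of the automaton trivializes every topological hypothesis of Assumption~\ref{ass:hbc}, and the only nonroutine input is the appeal to Lemma~\ref{lem:OCs nonempty} to rule out an empty image in the domain inclusion.
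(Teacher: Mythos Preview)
Your proposal is correct and follows essentially the same approach as the paper: dismiss the topological hybrid basic conditions as trivial by finiteness, then obtain $\DBAC\subset\dom\GBAC$ via Lemma~\ref{lem:OCs nonempty} and $\GBAC(\DBAC)\subset\DBAC$ directly from~\eqref{eq:GBAC}--\eqref{eq:DBAC}. Your argument for $\deltac(s,o)\neq\emptyset$ is in fact slightly cleaner than the paper's, since you use directly that $o\in\OC_s$ (given by $\chi\in\DBAC$) rather than detouring through $\OC_s\neq\emptyset$.
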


Consider the open, bounded set
\begin{equation}
\label{eq:OBAC}
\OBAC:= \{ (\sf , o ) + r \colon \sf \in \Sf, o \in \OC_{\sf}, r \in
\Ba(0,\tfrac{1}{3})^\circ\}.
\end{equation}
We consider $\OBAC$ instead of $\{(\sf,o) \colon \sf \in \Sf,\, o \in \OC_{\sf} \}$ because we adhere to the setting of~\cite{subbaraman2016equivalence}, which characterizes recurrence for open sets. However, we emphasize that $\OBAC$ and the latter set are the same for our purposes. Indeed, the radius in the inflated set $\OBAC$ is less than one, $s$ and $o$ take values in the integers by~\eqref{eq:HyGBAC} and Definition~\ref{def:BA}, so the values introduced by the inflation are nonintegers and \emph{artificial}. 
$\OBAC$ has the next property.
\begin{lemma}
\label{lem:recOBAC}
Under Assumption~\ref{ass:LTLfeas}, the set $\OBAC$ in~\eqref{eq:OBAC} is \UlyGR for $\HyBAC$ in~\eqref{eq:HyGBAC}.
\end{lemma}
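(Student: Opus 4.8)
The plan is to verify the two items of Definition~\ref{def:GRandUGR} directly, exploiting that $\HyBAC$ in~\eqref{eq:HyGBAC} is a pure difference inclusion on a finite, integer-valued state set. Item~\textit{(i)} holds vacuously: the data of $\HyBAC$ specify only a jump map, so the flow set is empty, no solution flows, and a fortiori none escapes to infinity in finite ordinary time. All the content is therefore in item~\textit{(ii)}, and the crucial structural feature I would exploit is that finiteness of $S$ and $O$ lets me pick the required horizon $T$ as a single global constant, independent of the compact set $K$, so that uniformity will be automatic.

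For item~\textit{(ii)} I would first set the global bound $\dMAX := \max_{s \in S} d(s,\Sf)$, which is finite since $S$ is finite and, by Assumption~\ref{ass:LTLfeas} (cf.\ the footnote after~\eqref{eq:dist}), $d(s,\Sf) < \infty$ for every $s \in S$. I then propose $T := \dMAX + 1 > 0$ for \emph{every} compact $K$, reducing the claim to: each maximal solution reaches $\OBAC$ within $\dMAX$ jumps. Because the flow set is empty, every solution has domain $\{0\} \times \{0,1,\dots\}$ in the $(t,j)$ coordinates, so $t \equiv 0$ and only the jump index advances; it therefore suffices to count jumps.

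The heart of the proof is a strict-descent count on $d(\cdot,\Sf)$. By Lemma~\ref{lem:hbc-completeness-GBAC}, a maximal solution $\phi$ with $\phi(0,0) = (s_0,o_0) \in \DBAC$ stays in $\DBAC$ (since $\GBAC(\DBAC) \subset \DBAC$) and can always jump (since $\DBAC \subset \dom\GBAC$), hence is complete in $j$ and never stalls before reaching $\Sf$. Writing $\phi(0,j) = (s_j,o_j)$, the jump forces $s_{j+1} \in \deltac(s_j,o_j)$, and the first branch of~\eqref{eq:deltac} gives, whenever $s_j \notin \Sf$, the strict inequality $d(s_{j+1},\Sf) < d(s_j,\Sf)$. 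As $d(\cdot,\Sf)$ is integer-valued, the distance drops by at least one at every such step, so after at most $d(s_0,\Sf) \le \dMAX$ jumps the solution reaches some $s_{j^\ast} \in \Sf$ with $j^\ast \le \dMAX$. Membership $(s_{j^\ast},o_{j^\ast}) \in \DBAC$ yields $o_{j^\ast} \in \OC_{s_{j^\ast}}$, so $(s_{j^\ast},o_{j^\ast}) \in \OBAC$ (take $r = 0$ in~\eqref{eq:OBAC}); with $t + j^\ast = j^\ast \le \dMAX < T$ this is precisely item~\textit{(ii)}. Solutions started outside $\DBAC$ contribute nothing, since $\So_{\HyBAC}(K)$ contains only arcs with $\phi(0,0) \in C \cup D = \DBAC$.

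The one genuinely delicate point, and the step I would treat most carefully, is the well-foundedness of this descent: I must lean on Lemma~\ref{lem:OCs nonempty} and Lemma~\ref{lem:hbc-completeness-GBAC} to exclude a solution getting stuck, with no admissible successor, at a non-accepting state before $d$ reaches $0$, since otherwise the count could stop prematurely without ever meeting $\OBAC$. A secondary bookkeeping check is that the artificial inflation $\Ba(0,\tfrac13)^\circ$ in~\eqref{eq:OBAC} contains $r = 0$, so the true integer accepting configurations $(\sf,o)$ indeed lie in $\OBAC$; note also that recurrence requires only a single visit within $T$, so the descent to one accepting state suffices and nothing need be said about what happens afterwards.
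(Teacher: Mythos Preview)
Your proof is correct but takes a different route from the paper. The paper defines the Lyapunov-like function $\VBA(\chi) := d(s,\Sf)$, proves in an auxiliary lemma that it satisfies the discrete decrease condition~\eqref{eq:LyapFunRecJump} with $\muBA := 1 + \dMAX$, and then invokes \cite[Thm.~5]{subbaraman2016equivalence} for global recurrence together with \cite[Prop.~1]{subbaraman2016equivalence} (openness and boundedness of $\OBAC$) to upgrade to \UGR. You instead verify Definition~\ref{def:GRandUGR} directly, exploiting finiteness of $S$ to produce a single $T = \dMAX + 1$ uniform over all compact sets, and unwind the descent of $d(\cdot,\Sf)$ by hand. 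The underlying mechanism is identical---strict integer decrease of $d$ along $\deltac$ outside $\Sf$---but your argument is self-contained and avoids the external theorems, which is cleaner for this isolated lemma. The paper's packaging, on the other hand, has the advantage that $\VBA$ is explicitly introduced and its decrease property recorded; both are reused verbatim in the proof of Proposition~\ref{prop:OhRec} (see~\eqref{eq:Lf} and the jump-condition step), so the Lyapunov formulation is not gratuitous but sets up machinery for the main result.
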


Each maximal solution $\phi$ to~\eqref{eq:HyGBAC} is complete by Lemma~\ref{lem:hbc-completeness-GBAC}, and reaches infinitely often $\OBAC$ by Lemma~\ref{lem:recOBAC}. Indeed, the existence of a \emph{single}, finite $j\in \dom \phi$ satisfying $\phi(j) \in \OBAC$ (from Definition~\ref{def:GRandUGR} of \UGR of $\OBAC$) is sufficient to imply that $\phi$ reaches infinitely often $\OBAC$. Therefore, by the acceptance condition of Definition~\ref{def:BA}, for each complete solution $\phi=(s,o)$ to $\HyBAC$, the word corresponding to the component $o$ is accepted by $\Au$ by construction of $\OBAC$.

Imposing the strict decrease in the distance to $\Sf$ (see \eqref{eq:deltac}, \eqref{eq:GBAC}-\eqref{eq:DBAC}) allows giving guarantees in terms of recurrence, but can prune away some solutions. However, at least one solution with strict decrease exists and is found by our approach.

\section{Continuous-time dynamics and hybrid system}
\label{sec:addCT}

The continuous-time plant is given by a linear system with state $\xi \in \real^\nu$, control $u \in \real^m$ and output $y \in \real^p$:
\begin{equation}
\label{eq:ctPlant}
\dot \xi = \Ap \xi + \Bp u, \quad y = \Cp \xi.
\end{equation}
As we explained in Section~\ref{sec:BA}, our ultimate goal is to generate a word of observations $w_O$ (see Definition~\ref{def:BA}) that is accepted by the BA $\Au$. Then, we need to specify how to associate each solution to~\eqref{eq:ctPlant} with a word of observations. 
For each observation $o$, consider an open set $Y_o$ as the region of interest for $o$.
We say that the \emph{solution has generated the observation} $o$ when, under a suitable control action $u$, the output $y$ belongs to $Y_o$ and a jump is enabled in the \emph{preliminary} hybrid system
\begin{equation}
\label{eq:hsJustPlant}
\begin{aligned}
 \dot \xi & = \Ap \xi + \Bp u , & & y=\Cp \xi \in {\real^p \backslash {Y_o}}\\
 \xi^+ & = \xi, & & y=\Cp \xi \in Y_o,
\end{aligned}
\end{equation}
which we further design in Section~\ref{sec:recSetHybSys}. Compared to~\cite{bisoffi2018hybrid}, we consider here the more realistic setting of output feedback, instead of full-state feedback. This has the implication that even an open, \emph{bounded} set $Y_o$ would result into an open, \emph{unbounded} set for the state $\xi$ (whenever the matrix $\Cp$ in~\eqref{eq:ctPlant} has, nontrivially, less rows than columns). 
As a main result of this section (Proposition~\ref{prop:OhRec}), \UGR of a suitable set $\Oh$ (defined below in~\eqref{eq:Oh}) guarantees satisfaction of the LTL formula. However, $\Oh$ is open but \emph{unbounded} in the meaningful case of output feedback, and recurrence (and the corresponding Lyapunov tools) are provided in~\cite{subbaraman2016equivalence} for open, \emph{bounded} sets. So, we extend in Section~\ref{sec:auxResRec} some of those results for open, unbounded sets.

\subsection{Results for recurrence of an open, unbounded set}
\label{sec:auxResRec}

We show in Lemma~\ref{lem:recOnotBounded} that the (Lyapunov-like) sufficient conditions for \UGR of an open, bounded set $\O$ in~\cite{subbaraman2016equivalence} remain valid for \UGR of an open, unbounded set $\O$. 
\begin{lemma}
\label{lem:recOnotBounded}
Let the hybrid system $\Hy$ in~\eqref{eq:hs} satisfy Assumption~\ref{ass:hbc}, $\O \subset \real^n$ be an open, unbounded set and $V \colon \real^n \to \real_{\ge 0}$ a smooth function, radially unbounded relative to $C \cup D$, for which there exists $\mu>0$ such that
\begin{subequations}
\label{eq:LyapFunRec}
\begin{align}
& \langle \nabla V(x), f \rangle \le -1 + \mu \Ind_\O (x) & & \forall x \in C, f \in F(x) \label{eq:LyapFunRecFlow} \\
& V(g) - V(x) \le -1 + \mu \Ind_\O (x) & & \forall x \in D, g \in G(x). \label{eq:LyapFunRecJump}
\end{align}
\end{subequations}
Then $\O$ is \UlyGR for $\Hy$.
\end{lemma}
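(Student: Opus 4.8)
\textbf{Proof plan for Lemma~\ref{lem:recOnotBounded}.}

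The plan is to reduce the unbounded case to the bounded case of~\cite[Thm.~5]{subbaraman2016equivalence} by fixing an arbitrary compact set $K$ and producing the uniform bound $T$ required by item \emph{(ii)} of Definition~\ref{def:GRandUGR}. First I would dispose of item \emph{(i)}: the condition~\eqref{eq:LyapFunRecFlow} shows that along any flow $V$ is nonincreasing wherever $x \notin \O$ and increases at rate at most $\mu-1$ on $\O$, so $V$ grows at most linearly in ordinary time $t$ on any interval $I^j$; since $V$ is smooth (hence finite-valued) and radially unbounded relative to $C\cup D$, a solution confined to $C$ cannot have $|x|\to\infty$ in finite time, ruling out finite escape times. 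This is where radial unboundedness relative to $C\cup D$ is essential, and it is the one place where the unboundedness of $\O$ genuinely matters.

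Next I would establish the uniform hitting bound. Fix a compact $K\subset C\cup D$ and set $\rho:=\max_{x\in K}V(x)$, which is finite by continuity of $V$ and compactness of $K$. Take any $\phi\in\So_\Hy(K)$ and suppose it does \emph{not} satisfy the first alternative ($t+j<T$ for all $(t,j)$), i.e. it survives past hybrid time $T$; I must then exhibit $(t,j)\in\dom\phi$ with $t+j\le T$ and $\phi(t,j)\in\O$. The key observation is the contrapositive run along~\eqref{eq:LyapFunRec}: if $\phi(t,j)\notin\O$ for \emph{all} $(t,j)$ in the portion of the domain with $t+j\le T$, then on that portion $\Ind_\O(\phi)\equiv 0$, so~\eqref{eq:LyapFunRecFlow}--\eqref{eq:LyapFunRecJump} give $\langle\nabla V,f\rangle\le -1$ during flows and $V(g)-V(x)\le -1$ across jumps. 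Summing the change of $V$ over the flow intervals and the jumps up to hybrid time $t+j$ yields the standard estimate
\begin{equation}
\label{eq:VdecreaseEstimate}
V(\phi(t,j)) \le V(\phi(0,0)) - (t+j) \le \rho - (t+j),
\end{equation}
which is a direct adaptation of the argument in~\cite{subbaraman2016equivalence} and requires only that $V$ be absolutely continuous in $t$ along each $I^j$ (true since $\phi$ is a hybrid arc and $V$ is smooth). Choosing $T:=\rho+1>0$ then forces $V(\phi(t,j))<0$ once $t+j$ reaches $T$, contradicting $V\ge 0$. Hence the assumption $\phi(t,j)\notin\O$ on the whole portion $t+j\le T$ fails, so $\O$ is hit by hybrid time $T$; since $T=\rho+1$ depends only on $K$ (through $\rho$) and not on the particular solution, the bound is uniform over $\So_\Hy(K)$, as required.

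The main obstacle is making the summation in~\eqref{eq:VdecreaseEstimate} rigorous across the hybrid structure: the total decrease of $V$ from $(0,0)$ to $(t,j)$ must be written as an alternating sum of the absolutely-continuous decreases over the flow intervals $I^0,\dots,I^{j}$ and the jump decrements, and one must verify that each contributes at least the elapsed flow time and at least $1$ per jump, so that the aggregate decrease is at least $t+j$. This telescoping bookkeeping is exactly the estimate established in the bounded case, and because neither~\eqref{eq:LyapFunRec} nor the telescoping uses boundedness of $\O$, the sufficiency argument carries over verbatim; the only genuinely new ingredient is the finite-escape-time step above, where radial unboundedness relative to $C\cup D$ replaces the automatic properness that boundedness of $\O$ would otherwise supply. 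I would therefore structure the proof as: (a) no finite escape times via radial unboundedness, then (b) the uniform decrease estimate yielding $T=\rho+1$, invoking~\cite[Thm.~5]{subbaraman2016equivalence} for the telescoping details.
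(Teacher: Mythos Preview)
Your proposal is correct and takes essentially the same approach as the paper: both dispose of item~(i) by bounding the growth of $V$ along flows via~\eqref{eq:LyapFunRecFlow} and invoking radial unboundedness, and both obtain item~(ii) from the telescoping estimate $V(\phi(t,j))\le V(\phi(0,0))-(t+j)$ outside $\O$, yielding the uniform bound $T=\max_{K}V+1$. The only cosmetic difference is that the paper phrases item~(ii) through maximal solutions $\phir$ of the restriction $\HyR{\real^n\setminus\O}$ and an endpoint analysis at $\phir(\bar T,\bar J)$ (a framing deliberately set up for reuse in Proposition~\ref{prop:recOnotBoundedRelaxed}), whereas you run the equivalent direct contrapositive on $\phi$.
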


Lemma~\ref{lem:recOnotBounded} and its proof are instrumental to prove the next Proposition~\ref{prop:recOnotBoundedRelaxed}, where we propose relaxed Lyapunov-like conditions for \UGR with respect to Lemma~\ref{lem:recOnotBounded} in the same way that \cite[Prop.~3.29]{goebel2012hybrid} proposed relaxed Lyapunov conditions for uniform global asymptotic stability with respect to~\cite[Thm.~3.18]{goebel2012hybrid}. The restriction $\HyR{\real^n\backslash \O}$ is defined in~\eqref{eq:hsRestr}.
\begin{proposition}
\label{prop:recOnotBoundedRelaxed}
Let the hybrid system $\Hy$ in~\eqref{eq:hs} satisfy Assumption~\ref{ass:hbc}, $\O \subset \real^n$ be an open, unbounded set and $V \colon \real^n \to \real_{\ge 0}$ a smooth function, radially unbounded relative to $C \cup D$, strictly positive on $(C \cup D) \backslash\O$ (i.e., $0\notin V((C \cup D) \backslash\O)$), and for which there exist $\mu>0$, $\lambdac\in\real$, $\lambdad\in \real$ such that
\begin{subequations}
\label{eq:LyapFunRecRelaxed}
\begin{align}
& \langle \nabla V(x), f \rangle \!\le\! \lambdac V(x) + \mu \Ind_\O (x) & & \forall x \in C, f \in F(x) \label{eq:LyapFunRecFlowRelaxed} \\
& V(g) \le e^{\lambdad} V(x) + \mu \Ind_\O (x) & & \forall x \in D, g \in G(x). \label{eq:LyapFunRecJumpRelaxed}
\end{align}
Assume further that there exist $\gamma > 0$ and $M >0$ such that, for each maximal solution $\phir$ to $\HyR{\real^n\backslash \O}$, $(t,j) \in \dom \phir$ implies 
\begin{equation}
\label{eq:condOn_tj}
\lambdac t + \lambdad j \le M - \gamma (t+j).
\end{equation}
\end{subequations}
Then $\O$ is \UlyGR for $\Hy$.
\end{proposition}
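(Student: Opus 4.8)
The plan is to mirror the proof of Lemma~\ref{lem:recOnotBounded}, replacing its uniform ``decrease by one per unit of hybrid time'' with an exponential decay estimate for $V$ along solutions that avoid $\O$, obtained by combining the relaxed bounds \eqref{eq:LyapFunRecFlowRelaxed}--\eqref{eq:LyapFunRecJumpRelaxed} with the hybrid-time inequality \eqref{eq:condOn_tj}. Concretely, I would first fix a maximal solution $\phir$ to $\HyR{\real^n\backslash\O}$. Since such a solution stays in $\real^n\backslash\O$, the indicator terms in \eqref{eq:LyapFunRecFlowRelaxed}--\eqref{eq:LyapFunRecJumpRelaxed} vanish along it, so $\tfrac{d}{dt}V(\phir(t,j))\le\lambdac V(\phir(t,j))$ for almost all flow times and $V(\phir(t,j+1))\le e^{\lambdad}V(\phir(t,j))$ at jumps. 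A comparison (Gr\"onwall) argument on each flow interval, chained across jumps, then yields $V(\phir(t,j))\le e^{\lambdac t+\lambdad j}V(\phir(0,0))$ for every $(t,j)\in\dom\phir$. Invoking \eqref{eq:condOn_tj} gives the key estimate $V(\phir(t,j))\le e^{M}e^{-\gamma(t+j)}V(\phir(0,0))$, i.e.\ $V$ decays exponentially in hybrid time along any solution confined to $\real^n\backslash\O$.

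Second, I would turn this decay into a uniform bound on the hybrid time that such solutions can accumulate. Fix a compact set $K$ and set $v_{\max}:=\max_{x\in K}V(x)$. Because $\gamma(t+j)\ge 0$, \eqref{eq:condOn_tj} also gives $\lambdac t+\lambdad j\le M$, so the estimate above confines $\phir$ to the sublevel set $L:=\{x\in C\cup D\colon V(x)\le e^{M}v_{\max}\}$. Radial unboundedness of $V$ relative to $C\cup D$ makes $L$ bounded, and closedness of $C\cup D$ together with continuity of $V$ makes it closed; hence $L\cap(\real^n\backslash\O)$ is compact and contained in $(C\cup D)\backslash\O$, where $V$ is strictly positive by hypothesis. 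Therefore $\epsilon:=\min_{L\cap(\real^n\backslash\O)}V>0$. Combining $V(\phir(t,j))\ge\epsilon$ with the decay estimate forces $e^{-\gamma(t+j)}\ge\epsilon/(e^{M}v_{\max})$, that is $t+j\le\tfrac{1}{\gamma}\ln\!\big(e^{M}v_{\max}/\epsilon\big)=:T(K)$, a bound depending only on $K$ (through $v_{\max}$ and $\epsilon$), $M$ and $\gamma$. Separately, the flow bound $\langle\nabla V(x),f\rangle\le\lambdac V(x)+\mu$ with radial unboundedness precludes finite escape times for \eqref{eq:hsFlow}, giving item~(i) of Definition~\ref{def:GRandUGR}.

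Finally, I would transfer this to \UGR of $\O$ for $\Hy$. Given $K$, I take $T:=T(K)+1$. For $\phi\in\So_\Hy(K)$, if $\phi$ never meets $\O$ then $\phi$ is a solution to $\HyR{\real^n\backslash\O}$ and, being contained in a maximal such solution, inherits $t+j\le T(K)<T$ for all $(t,j)\in\dom\phi$; otherwise, the portion of $\phi$ before its first visit to $\O$ is such a restricted solution, so that first visit occurs at some $(t,j)$ with $t+j\le T(K)+1=T$ (the $+1$ absorbing a possible entry through a jump). Either way the two alternatives of item~(ii) of Definition~\ref{def:GRandUGR} hold, so $\O$ is \UGR. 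The step I expect to be most delicate is the bookkeeping that an $\O$-avoiding piece of a solution to $\Hy$ is genuinely a solution to the restricted system $\HyR{\real^n\backslash\O}$ (so that \eqref{eq:condOn_tj} is applicable) and the careful treatment of the first entry into the \emph{open} set $\O$; the quantitative heart, the uniform bound $T(K)$, rests entirely on the positive lower bound $\epsilon$, which is precisely where strict positivity of $V$ off $\O$ and radial unboundedness (through compactness of $L$) are both indispensable.
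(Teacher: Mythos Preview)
Your proposal is correct and follows essentially the same line as the paper's proof: derive the exponential estimate $V(\phir(t,j))\le e^{\lambdac t+\lambdad j}V(\phir(0,0))$ along restricted solutions, combine it with \eqref{eq:condOn_tj} and a positive lower bound on $V$ off $\O$ to cap the hybrid time, and use the Gr\"onwall/comparison argument for item~(i). The only minor difference is that the paper takes the lower bound as $\Vl:=\min_{x\in\Cr\cup\Dr}V(x)>0$ directly over the closed set $\Cr\cup\Dr$ (radial unboundedness guarantees the infimum is attained), rather than first confining to a compact sublevel set as you do; this makes the paper's $\Vl$ independent of $K$, but either route yields the required uniform $\hat T$ and then defers the ``exit from the restriction'' bookkeeping to the end of the proof of Lemma~\ref{lem:recOnotBounded}.
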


Intuitively speaking, \eqref{eq:LyapFunRecFlowRelaxed}-\eqref{eq:LyapFunRecJumpRelaxed} allow $V$ to increase even out of $\O$ (for positive $\lambdac$ or $\lambdad$) as long as this increase is balanced by an overall decrease in~\eqref{eq:condOn_tj}, which we emphasize is checked on solutions $\phir$ of the restriction $\HyR{\real^n\backslash  \O}$. 
Proposition~\ref{prop:recOnotBoundedRelaxed} is key for Proposition~\ref{prop:OhRec}.

\subsection{Hybrid system of logic and plant: recurrent set}
\label{sec:recSetHybSys}

At the beginning of Section~\ref{sec:addCT} we specified in~\eqref{eq:hsJustPlant} how solutions generate an observation $o$ corresponding to an open set $Y_o$. We assume:
\begin{assumption}
\label{ass:Yo open}
$Y_o$ is open for each $o \in O$, and $Y_o \cap Y_{o'} = \emptyset$ for each $o, o' \in O$, $o \neq o'$.
\end{assumption}

To design an output feedback scheme to steer $y$ to $Y_o$ through $u$, we make the next classical assumption for setpoint control (see, e.g., \cite[\S 23.6]{hespanha09}) on the plant.
\begin{assumption}
\label{ass:plant}
The number of inputs $m$ is equal to the number of outputs $p$, the matrix $\smat{\Ap & \Bp \\ \Cp & 0}$ is invertible, the pair $(\Ap,\Bp)$ is controllable and the pair $(\Ap,\Cp)$ is observable. 
\end{assumption}

Under Assumption~\ref{ass:plant}, a generic point $y_o \in Y_o$ determines a point $(\xi_o,u_o)$ (used in the feedback law $u$ in~\eqref{eq:controller}) from
\begin{equation}
\label{eq:steady state}
0 = \Ap \xi_o + \Bp u_o, \quad y_o = \Cp \xi_o.
\end{equation}
We can straightforwardly design for~\eqref{eq:ctPlant} the gains $\Kp$ and $\Lp$ of an asymptotically stable output feedback scheme 
\begin{equation}
\label{eq:controller}
\dot{\hat \xi} = \Ap \hat \xi + \Bp u + \Lp (y - \hat y), \, \hat y = \Cp \hat \xi, u = - \Kp (\hat \xi - \xi_o) + u_o
\end{equation}
where $\hat \xi$ is an estimate of $\xi$. Then, for a given observation $o$ we want the solution to generate, we use the scheme in~\eqref{eq:controller} for~\eqref{eq:ctPlant}, select $\rho_o>0$ so that $\Ba(y_o,\rho_o) \subset Y_o$ (which is possible by $y_o \in Y_o$ and $Y_o$ open by Assumption~\ref{ass:Yo open}), and get%
\begin{subequations}
\label{eq:hsGivenObserv}
\begin{align}
&\hspace*{-.2cm} \dot \zeta   = \Fp \zeta \!+\!  \gp, \hspace*{.2cm} \zeta \! \in C_o \! := \! \{ (\xi,\hat \xi)\!\in\! \real^{2 \nu} \!\colon\! \Cp \xi\, \notin \Ba(y_o,\rho_o)^\circ \} \label{eq:hsGivenObservFlow}\\
&\hspace*{-.2cm} \zeta^+  = \zeta , \hspace*{.75cm} \zeta \! \in D_o \! :=\! \{ (\xi,\hat \xi)\!\in\! \real^{2 \nu} \!\colon\! \Cp \xi \!\in\! \Ba(y_o,\rho_o) \}\label{eq:hsGivenObservJump}
\end{align}
\begin{equation}
\zeta \!\!:=\!\! \bmat{ \xi \\ \hat \xi}\!\!, \Fp\!\!:=\!\! \bmat{\Ap & \!\!-\!\Bp \Kp\\ \Lp \Cp & \!\!\!\Ap \!-\! \Bp \Kp\! - \!\Lp \Cp}\!\!, \gp \!\!:=\!\! \bmat{\Bp \Kp \xi_o \!+\! \Bp u_o\!\\  \Bp \Kp \xi_o \!+\! \Bp u_o\! }\!\!.\label{eq:clMatricesPlants}
\end{equation}
\end{subequations}
\eqref{eq:hsGivenObserv} imposes that for a given observation $o$, solutions reach a subset of the corresponding region of interest $Y_o$ before they can jump.
At such a jump, $\xi$ and $\hat \xi$ do not change.

We now augment \eqref{eq:hsGivenObserv} with the BA of the logic. From Section~\ref{sec:BA}, if $s$ and, in particular, $o$ are updated according to~\eqref{eq:HyGBAC}, then the accepting states of the BA $\Au$ are visited infinitely often, which yields words of observations accepted by the LTL formula. 
Then, combining the plant generating observations as in~\eqref{eq:hsGivenObserv} and the controlled BA in~\eqref{eq:HyGBAC} leads to:%
\begin{subequations}
\label{eq:hsCL}
\begin{align}
\xhDot & =
\smat{
\dot \chi\\
\dot \zeta\\
}
=
\smat{
0\\
\Fp \zeta + \gp \\
}
=: \fh(\xh), & &
\xh \in \Ch \label{eq:hsCLflow}\\
\xh^+ & =
\smat{
\chi^+ \\
\zeta^+\\
}
\in
\smat{
\GBAC(\chi) \\
\zeta \\
}
=:\Gh(\xh), & &
\xh \in \Dh.\label{eq:hsCLjump}
\end{align}
The overall state is defined concisely as $\xh := (\chi,\zeta)$ and the overall flow and jump sets $\Ch$ and $\Dh$ are
\begin{align}
\Ch&:= \{\smat{\chi\\\zeta}=\smat{s\\o\\\zeta}\in  \real^{2(\nu+1)} \colon \chi \in \DBAC,\, \zeta \in C_o \} \label{eq:hsCLFlowSet}\\
\Dh&:= \{\smat{\chi\\\zeta}=\smat{s\\o\\\zeta}\in  \real^{2(\nu+1)} \colon \chi \in \DBAC,\, \zeta \in D_o\}, \label{eq:hsCLJumpSet}
\end{align}
\end{subequations}
where $\DBAC$ was defined in~\eqref{eq:DBAC}. Let us comment on~\eqref{eq:hsCL}. $s$ and $o$ do not change during flow in~\eqref{eq:hsCLflow}, and $y=\Cp \xi$ is steered towards $y_o \in Y_o$ by the control law~\eqref{eq:controller} embedded in $\Fp$ and $\gp$. 
$\chi$ is updated according to $\GBAC$ as in Section~\ref{sec:BA}, and $\zeta$ does not change at a jump in~\eqref{eq:hsCLjump}. From~\eqref{eq:hsCLFlowSet}-\eqref{eq:hsCLJumpSet}, jumps are allowed only in the set $\Dh$ comprising all possible $\chi \in \DBAC$ and $\zeta \in D_o$, whereas for all such $\chi$, solutions can only flow before their component $\zeta$ reaches $D_o$.

\begin{remark}
\label{rem:avoidance}
Solutions to~\eqref{eq:hsCL} with the second component equal to some $o$ are allowed to jump only after they reach $D_o$, although they can flow through $Y_{o'}$ with $o' \neq o$.
This adopts the approach of effective path in~\cite[\S III.B]{guo2016communication}. If the LTL semantics imposes active avoidance of $Y_{o'}$
with $o \neq o'$, the tools of this paper can be complemented with hybrid solutions for robust global asymptotic stability of a target in the presence of multiple obstacles as in~\cite{braun2018explicit}. Such an approach involves the intuitive construction of avoidance sets around such $Y_{o'}$'s, and a suitable orchestration between the logical modes of stabilization and avoidance (see also~\cite{braun2018unsafe,berkane2019hybrid}), but is not pursued here due to space constraints.
\end{remark}

\eqref{eq:hsCL} satisfies Assumption~\ref{ass:hbc} (cf. Lemma~\ref{lem:hbc-completeness-GBAC}), and its solutions have the next property.
\begin{lemma}
\label{lem:complSol}
Under Assumptions~\ref{ass:LTLfeas}--\ref{ass:plant}, each maximal solution $\phi$ to~\eqref{eq:hsCL} is complete and $\sup_j \dom \phi =+ \infty$.
\end{lemma}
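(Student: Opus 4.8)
The plan is to invoke the general existence-and-completeness result \cite[Prop.~2.34]{goebel2012hybrid} for the hybrid system \eqref{eq:hsCL}, which satisfies Assumption~\ref{ass:hbc} (as noted before the statement, via Lemma~\ref{lem:hbc-completeness-GBAC}), and then to exclude the two non-complete alternatives it allows. First I would verify the viability (tangent-cone) condition at every point of $\Ch\setminus\Dh$. Such points have $\Cp\xi$ strictly outside the closed ball $\Ba(y_o,\rho_o)$, so locally $C_o$ equals all of $\real^{2\nu}$; moreover, since $\DBAC$ is a finite, hence discrete, set, the tangent cone to $\Ch$ in the $\chi$-direction is $\{0\}$, and $\dot\chi=0$ lies in it. Consequently the flow map $\fh$ lies in the tangent cone to $\Ch$ at each such point, so nontrivial solutions exist from every initial condition and \cite[Prop.~2.34]{goebel2012hybrid} applies.

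To establish completeness I would rule out the two bad cases. Finite escape during flow is impossible because the flow dynamics \eqref{eq:hsCLflow} are affine-linear in $\zeta$ (with $\chi$ frozen) and hence admit globally defined solutions. A jump leaving $\Ch\cup\Dh$ is impossible because, for $\xh\in\Dh$, one has $\Gh(\xh)=(\GBAC(\chi),\zeta)$ with $\GBAC(\chi)\subset\DBAC$ by Lemma~\ref{lem:hbc-completeness-GBAC}, while for every observation $o^+$ the identity $C_{o^+}\cup D_{o^+}=\real^{2\nu}$ holds (a point lies in the closed ball, hence in $D_{o^+}$, or outside its interior, hence in $C_{o^+}$). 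Thus the post-jump $\zeta$ lies in $C_{o^+}\cup D_{o^+}$, and the post-jump state is again in $\Ch\cup\Dh$. With both alternatives excluded, every maximal solution is complete.

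It remains to show $\sup_j\dom\phi=+\infty$, which I expect to be the main obstacle and where the controller design enters. I would argue by contradiction: suppose $J:=\sup_j\dom\phi<\infty$. By completeness the last flow interval $[t_J,\infty)\times\{J\}$ is unbounded in $t$, and along it the logical component is frozen at some $(s_J,o_J)$ while $\zeta$ solves $\dot\zeta=\Fp\zeta+\gp$. By Assumption~\ref{ass:plant}, the gains $\Kp,\Lp$ render $\Fp$ Hurwitz (separation principle), and its unique equilibrium, obtained from \eqref{eq:steady state} and \eqref{eq:clMatricesPlants}, is $\zeta=(\xi_{o_J},\xi_{o_J})$, so $y=\Cp\xi\to\Cp\xi_{o_J}=y_{o_J}$. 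Since $y_{o_J}$ is the center of $\Ba(y_{o_J},\rho_{o_J})$, there is a finite $\tau$ with $\Cp\xi(t_J+\tau)\in\Ba(y_{o_J},\rho_{o_J})^\circ$, i.e.\ $\zeta(t_J+\tau)\notin C_{o_J}$, contradicting that the solution flows inside $C_{o_J}$ on the whole interval. Hence the flow phase must in fact terminate no later than the instant $y$ crosses $\partial\Ba(y_{o_J},\rho_{o_J})$; at that instant the state lies in $D_{o_J}$ and, flow being impossible beyond it, a jump occurs, pushing $j$ past $J$. This contradiction gives $\sup_j\dom\phi=+\infty$.

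The delicate points I would check with care are the tangent-cone condition within the product-type structure of $\Ch$ (exploiting discreteness of $\DBAC$) and the regulation/geometry argument closing the last case: asymptotic regulation of $y$ to the interior point $y_{o_J}$, together with $C_o$ excluding the open ball $\Ba(y_o,\rho_o)^\circ$, forces every flow phase to have finite duration, so a complete solution cannot accumulate only finitely many jumps.
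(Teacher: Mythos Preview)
Your proposal is correct and follows essentially the same approach as the paper: apply the standard existence/completeness proposition from \cite{goebel2012hybrid} (the paper cites Prop.~6.10 rather than Prop.~2.34), check the viability condition on $\Ch\setminus\Dh$, rule out finite escape via linearity of the flow and rule out jumps leaving $\Ch\cup\Dh$ via Lemma~\ref{lem:hbc-completeness-GBAC} and $C_o\cup D_o=\real^{2\nu}$, then derive a contradiction to $\sup_j\dom\phi<\infty$ from the fact that the Hurwitz closed loop drives $y$ to $y_o\in\Ba(y_o,\rho_o)^\circ$, forcing the solution out of $C_o$ in finite time. Your write-up is more explicit on the tangent-cone structure and on the final regulation argument than the paper's terse version, but the underlying ideas coincide.
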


Thanks to Proposition~\ref{prop:recOnotBoundedRelaxed}, we have the next result.
\begin{proposition}
\label{prop:OhRec}
Under Assumptions~\ref{ass:LTLfeas}--\ref{ass:plant} and with $\rho_o>0$ and $\Ba(y_o,\rho_o) \subset Y_o$ for each $o \in O$, the open, unbounded set
\begin{equation}
\label{eq:Oh}
\Oh \!:= \!\{\xh \!=\!( \chi,\xi,\hat \xi) \!\in\! \real^{2(\nu+1)} \!\colon \!\chi \in \OBAC, \Cp \xi \in Y_o \}
\end{equation}
is \UlyGR for~\eqref{eq:hsCL}.
\end{proposition}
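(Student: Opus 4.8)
The plan is to certify \UlyGR of $\Oh$ by constructing a smooth Lyapunov-like function and invoking Proposition~\ref{prop:recOnotBoundedRelaxed}. I would take $\Vh(\xh) := \VBA(\chi) + W(o,\zeta)$, combining a certificate for the logic with one for the plant. For the logic, the strict distance decrease encoded in $\deltac$ (used in the proof of Lemma~\ref{lem:recOBAC}) furnishes a smooth, bounded $\VBA \ge 0$ with $\VBA(\chi) \ge 1$ for $\chi \notin \OBAC$ (equivalently $s \notin \Sf$) and jump estimate $\VBA(\chi^+) - \VBA(\chi) \le -1 + \muBA\,\Ind_{\OBAC}(\chi)$ along $\GBAC$. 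For the plant, Assumption~\ref{ass:plant} renders $\Fp$ Hurwitz, so there is $P \succ 0$ with $\Fp^T P + P \Fp \prec 0$; with the flow equilibrium $\zeta_o := (\xi_o,\xi_o)$ (which satisfies $\Fp \zeta_o + \gp = 0$ by~\eqref{eq:steady state}), each $W_o(\zeta) := (\zeta-\zeta_o)^T P (\zeta-\zeta_o)$ obeys $\langle \nabla W_o(\zeta), \Fp\zeta + \gp \rangle \le -c\,W_o(\zeta)$ for some $c>0$. Because $\Ch$ and $\Dh$ contain only integer values of $o$ (as $\chi \in \DBAC$), I let $W(o,\zeta)$ be any smooth interpolation with $W(o,\zeta) = W_o(\zeta)$ at integer $o$; then $\Vh$ is smooth, is radially unbounded relative to $\Ch \cup \Dh$ (since $\chi$ is bounded while $W_o(\zeta) \to \infty$ as $|\zeta| \to \infty$), and is strictly positive off $\Oh$ (if $\chi \notin \OBAC$ then $\VBA \ge 1$; if $\chi \in \OBAC$ but $\Cp\xi \notin Y_o$ then $\zeta \neq \zeta_o$, so $W_o > 0$).

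The decisive observation for the flow condition is that flow occurs only outside the target ball: on $\Ch$ one has $\Cp\xi \notin \Ba(y_o,\rho_o)^\circ$, hence $|\Cp(\xi-\xi_o)| \ge \rho_o$ and $W_o(\zeta) \ge \wmin$ for some $\wmin > 0$ uniform over the finitely many $o$. This lets me take $\lambdac = -\epsilon < 0$: along flow $\langle \nabla \Vh, \fh \rangle = \langle \nabla W_o, \Fp\zeta + \gp \rangle \le -c\,W_o$, and since $\VBA$ is bounded and $W_o \ge \wmin > 0$, for $\epsilon$ small enough $-c\,W_o \le -\epsilon(\VBA(\chi) + W_o) \le \lambdac \Vh + \muh \Ind_\Oh$. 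A strictly negative $\lambdac$ is precisely what~\eqref{eq:condOn_tj} will require, since inter-jump flow times are not uniformly bounded.

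For the jump condition, $\zeta$ is frozen while $o \mapsto o'$; setting $\Delta := \max_{o,o'} |\zeta_o - \zeta_{o'}|$ and applying Young's inequality yields $W_{o'}(\zeta) \le (1+\eta) W_o(\zeta) + K$ with $K$ depending only on $\Delta$ and $\eta$. Choosing $e^{\lambdad} = 1+\eta$ (so $\lambdad > 0$) and adding the $\VBA$ jump estimate, I verify $\Vh(g) \le e^{\lambdad}\Vh(\xh) + \muh\Ind_\Oh(\xh)$ in two cases. Off $\Oh$, the pre-jump state satisfies $\chi \notin \OBAC$ (since $\zeta \in D_o$ forces $\Cp\xi \in \Ba(y_o,\rho_o) \subset Y_o$, so staying off $\Oh$ forces $s \notin \Sf$), whence $\VBA(\chi) \ge 1$ absorbs the constant $K-1$ once $\eta$ is taken large enough. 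On $\Oh$, the term $\muh$ on the right absorbs the remaining constants once $\muh$ is large. A single $\muh$ together with $\lambdac,\lambdad$ then serves both~\eqref{eq:LyapFunRecFlowRelaxed}--\eqref{eq:LyapFunRecJumpRelaxed}.

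It remains to establish~\eqref{eq:condOn_tj} on the restriction $\HyR{\real^{2(\nu+1)} \backslash \Oh}$. A jump there from $x$ requires $x \in \Dh \setminus \Oh$; since $\zeta \in D_o$ gives $\Cp\xi \in Y_o$, staying off $\Oh$ again forces $\chi \notin \OBAC$, i.e.\ $s \notin \Sf$, so by the strict decrease built into $\deltac$ each such jump lowers $d(s,\Sf)$ by at least one. As $d(\cdot,\Sf) \le \dMAX$, at most $\dMAX$ jumps can occur (once $\chi$ first reaches $\OBAC$, disjointness of the $Y_o$'s from Assumption~\ref{ass:Yo open} keeps $x$ off $\Oh$ only along the ensuing flow, which then enters $\Oh$ and terminates the restricted solution). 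Hence $j \le \dMAX$ uniformly, and with $\gamma \in (0,\epsilon)$ and $M := \max\{0,(\lambdad + \gamma)\dMAX\}$ we get $\lambdac t + \lambdad j = -\epsilon t + \lambdad j \le M - \gamma(t+j)$, which is~\eqref{eq:condOn_tj}; Proposition~\ref{prop:recOnotBoundedRelaxed} then gives \UlyGR of $\Oh$. I expect the main obstacle to be exactly the sign conflict on $\lambdac$ — the flow condition resists a negative rate because the logic term $\VBA(\chi)$ is frozen during flow, while~\eqref{eq:condOn_tj} demands a negative rate to overcome unbounded flow times — resolved only by the uniform lower bound $\wmin$ on $W_o$ over the flow set, together with absorbing the setpoint jump $\Delta$ through $\lambdad > 0$ and $\VBA \ge 1$ off $\OBAC$.
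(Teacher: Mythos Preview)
Your proposal is correct and follows essentially the same route as the paper: both invoke Proposition~\ref{prop:recOnotBoundedRelaxed} with a Lyapunov function that sums a logic certificate $\VBA$ and a quadratic plant certificate, establish the flow bound via the uniform positive floor $\wmin$ of the plant quadratic on $\Ch$, absorb the setpoint shift at jumps using $\VBA\ge 1$ off $\OBAC$, and obtain~\eqref{eq:condOn_tj} from the uniform jump bound $j\le\dMAX$ in the restriction. The differences are only in parametrization: the paper writes $\Vh=\VBA+\lambda W$ and fixes $e^{\lambdad}=2$ while choosing the weight $\lambda$ in a carefully computed interval, whereas you take $\lambda=1$ and instead let $e^{\lambdad}=1+\eta$ be large; the paper also works in the block-triangularizing error coordinates $(\xi-\xi_o,\xi-\hat\xi)$ rather than directly with $\zeta-\zeta_o$ and the Hurwitzness of $\Fp$. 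Both executions realize the same mechanism you identify at the end of your proposal.
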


Suppose that for each $o$, the high-level controller given by~\eqref{eq:hsCL} is endowed with a low-level controller that enforces active avoidance of all other regions of interest $Y_{o'}$ with $o' \neq o$, as in Remark~\ref{rem:avoidance}. Under this assumption, for each solution $\phi$ to~\eqref{eq:hsCL}, $\dom \phi$ consists of infinitely many intervals $[t_j,t_{j+1}] \times \{j\}$ (see Section~\ref{sec:prel}) by Lemma~\ref{lem:complSol}.
Hence, for each $\phi$ and each such $j=0,1,\dots$, for some $t_j' \in [t_j, t_{j+1}]$, the output $y$ exits the previous region of interest $Y_{o(t_j,j-1)}$ (if $j>0$) over $[t_j,t_j')\times \{j\}$, belongs to $\real^p \backslash \bigcup_{o' \in O\backslash \{ o(t_j,j)\}} Y_{o'}$ over $[t_j',t_{j+1}]\times \{j\}$ by the previous assumption, and satisfies $y(t_{j+1},j) \in Y_{o(t_j,j)}$.
Moreover, for each $j = 0,1, \dots$, $s$ and $o$ in $\phi$ do not change over $[t_j,t_{j+1}] \times \{ j \}$, so their evolution is captured by $j \!\mapsto\! s(t_j,j)\!=:\! \mathfrak{s}(j)$ and $j \! \mapsto \!o(t_j,j)\!=:\! \mathfrak{o}(j)$. \UGR of $\Oh$ in Proposition~\ref{prop:OhRec} implies then that $\mathfrak s$ reaches infinitely often $\Sf$ and the word $\mathfrak{o}$ is accepted by $\Au$.
The existence of complete solutions
with infinitely many jumps (Lemma~\ref{lem:complSol}) and enjoying recurrence (Proposition~\ref{prop:OhRec}), shows that if the problem is feasible by Assumption~\ref{ass:LTLfeas}, our approach can solve it.

The previous argument shows a limitation of our approach for LTL-synthesis in that it is a high-level controller, and needs to be endowed with a low-level controller for obstacle avoidance whereas it alone solves a relaxed LTL synthesis in terms of effective paths \cite[\S III.B]{guo2016communication}. 

Last, we compare an automata-based solution in \cite{belta2017formal} to ours.%

\begin{remark}
\label{rem:comp cost}
Both our approach and \cite[\S 5.1]{belta2017formal} start from the BA corresponding to the LTL formula.
Denote $v_\Au$ and $e_\Au$ the number of vertices and edges of $\Au$, and $\mathscr{O}(\cdot)$ an asymptotic upper bound in algorithm analysis \cite{cormen2009introduction}.
The overall cost of our approach is $\mathscr{O}(|O| \nu^3 + |S_f|(v_\Au + e_\Au))$ where the first term arises from obtaining $\Kp$, $\Lp$, and each $\xi_o$ and $u_o$ (by solving Lyapunov equations and through matrix operations for \eqref{eq:steady state}-\eqref{eq:controller}), and the second term arises from computing the distances $d$ on $\Au$ for $\deltac$.
On the other hand, we do not build any product automaton of $\Au$ and the transition system discretizing the continuous-time plant. Besides the cost of building such product, we also do \emph{not} have the cost of solving a Rabin game on the graph of the product automaton. This cost is $\mathscr{O}(|X|^2 |S|^2( 1+ |\Sigma|)^2)$ \cite[p.~91]{belta2017formal} where $|X|$ and $|\Sigma|$ are the cardinalities of the set of states and the set of inputs of the transition system. $|X|$ can be very large for the transition system to represent the plant accurately. 
E.g., if each plant state $\xi_i$, $i=1,\dots, \nu$, is discretized into $|\Xi|$ cells, the cost we do \emph{not} have is $\mathscr{O}(|\Xi|^{2\nu} |S|^2( 1+ |\Sigma|)^2)$ where $|\Xi|$ itself can be quite large for an accurate representation of the plant, and we achieve  polynomial instead of exponential complexity in $\nu$.
\end{remark}

\begin{figure}[!t]
\centering
\includegraphics[width=0.48\textwidth]{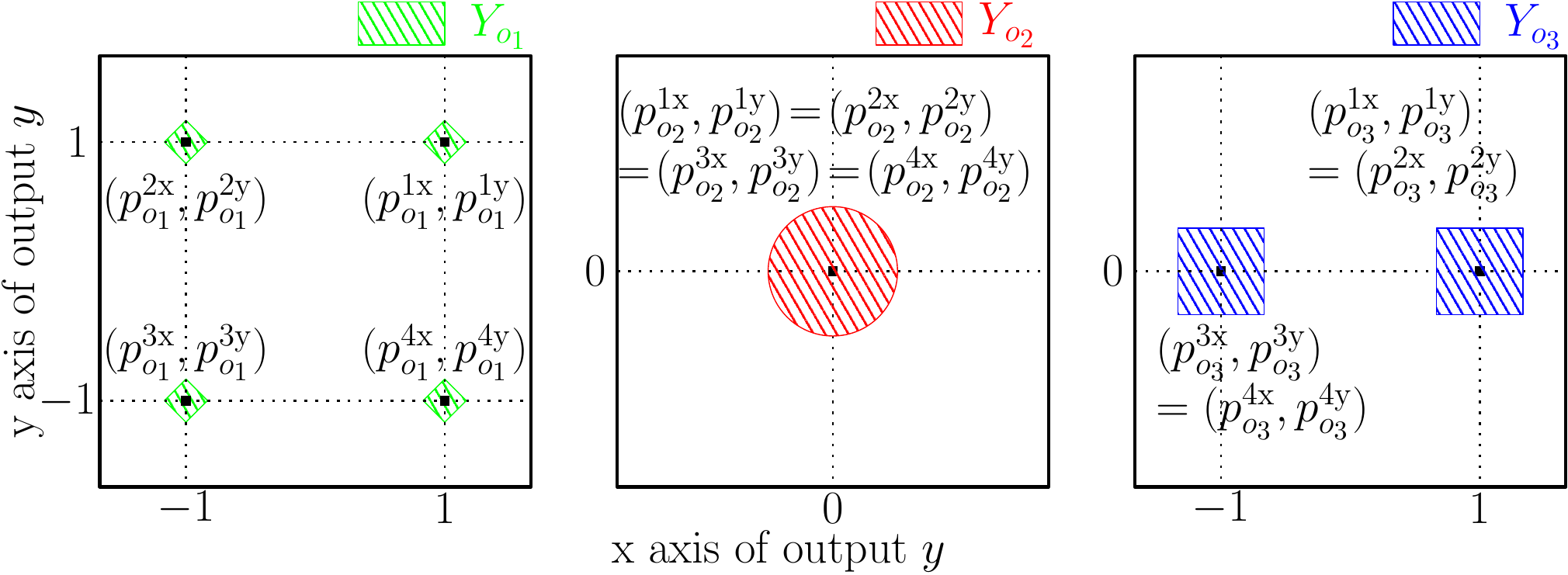}
\caption{The sets $Y_{o_k}$ in~\eqref{eq:Ook} for $k=1,\,2,\,3$ projected for each robot $i=1,\dots,4$ onto the x-y positions of the output $y$.
The  parameters $p^{i \mathrm{x}}_{o_k}$ and $p^{i \mathrm{y}}_{o_k}$ in~\eqref{eq:Ook} take the illustrated values and the radii in~\eqref{eq:Ook} are, for all $i=1,\dots,4$, $r^i_{o_1}=0.1$ and $r^i_{o_2}=0.3$ and $r^i_{o_3}=0.2$.}
\label{fig:Oo}
\end{figure}

\section{Numerical example}
\label{sec:ex}

In this section we illustrate that the control law designed to achieve recurrence of $\Oh$ in~\eqref{eq:Oh} for~\eqref{eq:hsCL} ensures the satisfaction of an LTL formula for service robots.

Each robot $i=1,\dots,4$ has $\mathrm{x}$ and $\mathrm{y}$ positions and velocities $(p^{i\mathrm{x}}, v^{i\mathrm{x}},p^{i\mathrm{y}},v^{i\mathrm{y}})$ as state, $\mathrm{x}$ and $\mathrm{y}$ forces $(f^{i\mathrm{x}}, f^{i\mathrm{y}})$ as input, and only positions as output. It is modeled as a point mass $m_i$ under viscous friction $\gamma_i$. Thus, the state equations are, for $i=1,\dots,4$,
\begin{subequations}
\label{eq:mtx_i}
\begin{align}
\smat{ \dot p^{i\mathrm{x}}\\  \dot v^{i\mathrm{x}} \\ \dot p^{i\mathrm{y}}  \\  \dot v^{i\mathrm{y}} } 
& = 
\smat{0 & 1 & 0 & 0\\ 0 & -\frac{\gamma_i}{m_i} & 0 & 0 \\ 0 & 0 & 0 & 1\\ 0 & 0 & 0 & -\frac{\gamma_i}{m_i} }
\smat{ p^{i\mathrm{x}}\\  v^{i\mathrm{x}} \\ p^{i\mathrm{y}}  \\  v^{i\mathrm{y}} }
+
\smat{0 & 0\\ \frac{1}{m_i} & 0\\ 0 & 0\\ 0 & \frac{1}{m_i}}
\smat{f^{i\mathrm{x}}\\ f^{i\mathrm{y}} } \nonumber \\
& =: \Ap_i \xi^i + \Bp_i u^i
\\
\smat{ p^{i\mathrm{x}}\\  p^{i\mathrm{y}} } 
& = \smat{1 & 0 & 1 & 0} \xi^i =: \Cp_i \xi^i =: y^i,
\end{align}
\end{subequations}
with values $m_i=1$ and $\gamma_i=1$. The overall physical state, input and output are then the stacked vectors $\xi = (\xi^1,\dots,\xi^4)$, $u = (u^1,\dots, u^4)$ and $y=(y^1,\dots,y^4)$. The LTL formula is
\begin{equation}
\label{eq:LTLformula}
(\square \lozenge o_2) \wedge \square\big( o_2 \implies (\bigcirc o_3)\big) \wedge \big( \square \lozenge o_2 \implies \square\lozenge o_1 \big),
\end{equation}
where the symbols $\square$, $\lozenge$, $\bigcirc$ denote respectively the temporal logic operators \textit{always}, \textit{eventually}, \textit{next} as in \cite[Def.~2.2]{belta2017formal}.
An intuitive rendering of the three terms in conjunction is that, in an accepted word $w_O$, \textit{(a)} $o_2$ should be always eventually present, \textit{(b)} whenever $o_2$ is present, $o_3$ should be present next, \textit{(c)} if $o_2$ is always eventually present, then $o_1$ should be always eventually present. 
E.g., $o_1$, $o_2$, $o_3$ can be meaningfully associated respectively with tasks ``charge'', ``pick a parcel'', ``deliver the parcel''. The sets where these tasks are carried out are given in Fig.~\ref{fig:Oo} and defined for $k=1,2,3$ as
\begin{equation}
\label{eq:Ook}
Y_{o_k} \!:=\! \Big\{ y \!\in\! \real^{8} \colon \!\bigg|\!\smat{p^{i\mathrm{x}}-p^{i \mathrm{x}}_{o_k}\\ p^{i\mathrm{y}}-p^{i \mathrm{y}}_{o_k}} \!\bigg|_{n(o_k)} \! < \! r^i_{o_k}, i = 1, \dots, 4 \Big\}
\end{equation}
where all the values of $p^{i \mathrm{x}}_{o_k}$, $p^{i \mathrm{y}}_{o_k}$, $r^i_{o_k}$ are in Fig.~\ref{fig:Oo} and the selected norms are $n(o_1):=1$, $n(o_2):=2$, $n(o_3):=\infty$.

\begin{figure}
\centering
\includegraphics[width=0.47\textwidth]{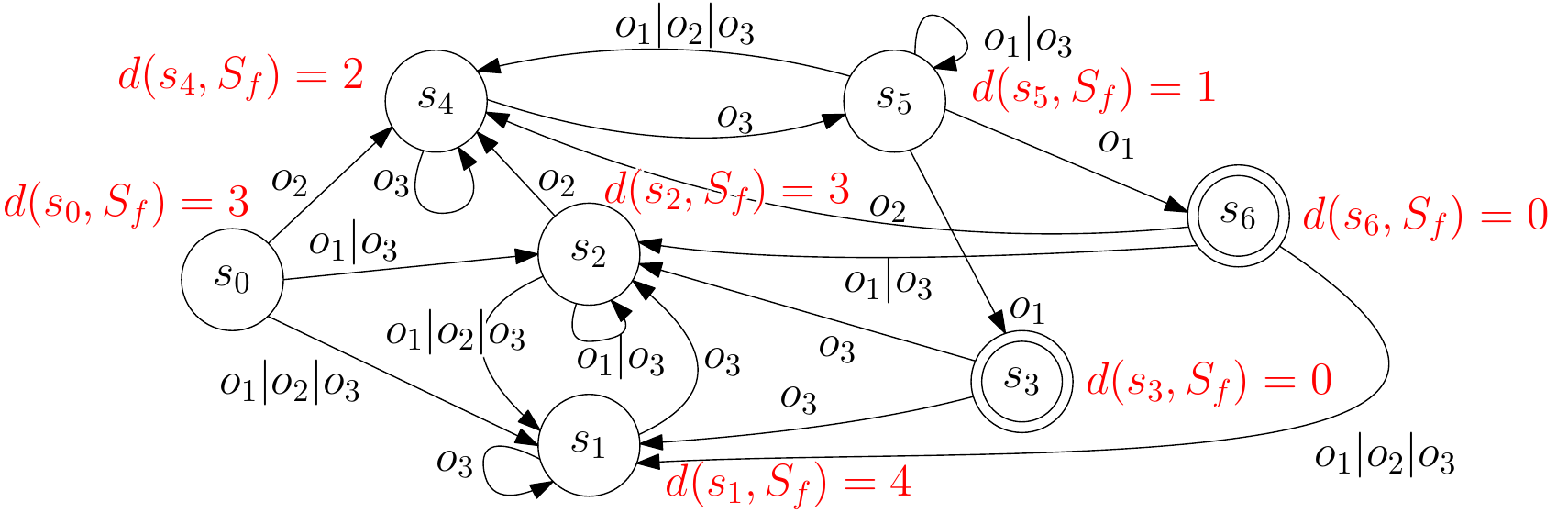}
\caption{The nondeterministic BA corresponding to the formula~\eqref{eq:LTLformula}. The notation $o_i|\dots|o_k$ next to a transition means that such a transition can be taken if either $o_i$, \dots, or $o_k$ are generated. Double circles denote accepting states in $\Sf$. The distance $d$ of each state to $\Sf$ in~\eqref{eq:dist} is labelled in red.
}
\label{fig:automaton}
\end{figure}

We obtain then through the tool \textsc{ltl2ba} \cite{gastin2001fast} the BA corresponding to~\eqref{eq:LTLformula}, partially simplified as in~\cite[Ex.~2.8]{belta2017formal} based on Assumption~\ref{ass:Yo open}. The BA has
\begin{equation}
\label{eq:SOF}
\begin{aligned}
S:=\{ & s_0, s_1, \dots, s_6 \}:=\{ 0, 1, \dots, 6\}, \, S_0:=\{ s_0\}, \,  \\ & \Sf:=\{s_3, s_6\}, \, O:=\{ o_1, o_2, o_3 \}:=\{1,2,3\},
\end{aligned}
\end{equation}
and is depicted in Fig.~\ref{fig:automaton}.
By comparing Fig.~\ref{fig:automaton} with the intuitive rendering of~\eqref{eq:LTLformula} in~\textit{(a)}-\textit{(c)} above, note indeed that for each transition $o_2$ from some $s$ to $s'$, a transition $o_3$ needs to be taken then from $s'$ as per \textit{(b)}; 
to visit infinitely often $s_3$ or $s_6$, $s_4$ needs to be visited infinitely often through the transition $o_2$ and then also the transition $o_1$ to $s_3$ or $s_6$ needs to be taken as per \textit{(a)}, \textit{(c)}.
We assign to each vertex $s\in S$ of the BA the distance $d$ in~\eqref{eq:dist} as in the red labels in Fig.~\ref{fig:automaton}. We report for each $(s,o)$ the set-valued mappings $\deltac(s,o)$ in~\eqref{eq:deltac} and $\GBAC(\smat{s\\o})$ in~\eqref{eq:GBAC} in the next table, where we omit $(s,o)$'s yielding empty $\deltac(s,o)$ and $\GBAC(\smat{s\\o})$.%
\begin{table}[h]
\centering
\[
\begin{array}{lll}
\toprule
(s,o) & \deltac(s,o) & \GBAC((s,o))\\
\midrule
(s_0,o_2), (s_2,o_2), (s_6,o_2) & \{s_4 \} &  \left\{\smat{s_4\\o_3}\right\}\\
(s_1,o_3), (s_3,o_3) & \{s_2 \} &  \left\{\smat{s_2\\o_2}\right\}\\
(s_4,o_3) & \{s_5 \} &  \left\{\smat{s_5\\o_1} \right\}\\
(s_5,o_1) & \{s_3, s_6 \} & \left\{\smat{s_3\\o_3},
\smat{s_6\\o_2} \right\} \\
\bottomrule
\end{array}
\]
\end{table}

\noindent This fully specifies the jump map in~\eqref{eq:hsCLjump}. Next we specify the quantities of the flow map in~\eqref{eq:hsCLflow}. For each $i$ and $o$, define $y^i_o:= (p^{i\mathrm{x}}_o,p^{i\mathrm{y}}_o)$, $\xi^i_o := ( p^{i\mathrm{x}}_o,  v^{i\mathrm{x}}_o, p^{i\mathrm{y}}_o,  v^{i\mathrm{y}}_o )$ and $u^i_o := ( f^{i\mathrm{x}}_o, f^{i\mathrm{y}}_o )$, which satisfy $0 = \Ap_i \xi_o^i + \Bp_i u_o^i$ and $y_o^i = \Cp_i \xi_o^i$ (cf. \eqref{eq:steady state}). 
For $\hat \xi^i := ( \hat p^{i\mathrm{x}}, \hat v^{i\mathrm{x}}, \hat p^{i\mathrm{y}}, \hat v^{i\mathrm{y}} )$ and $\hat y^i := (\hat p^{i\mathrm{x}}, \hat p^{i\mathrm{y}})$, the output feedback scheme~\eqref{eq:controller} for each robot $i=1, \dots, 4$ is 
\begin{equation}
\label{eq:controller_i}
\begin{aligned}
& \dot{\hat \xi}^i = \Ap_i \hat \xi^i + \Bp_i u^i + \Lp_i (y^i - \hat y^i),\,  \Lp_i := \smat{9 & 0\\ 16 & 0\\ 0 & 9\\ 0 & 16}\\
& \hat y^i = \Cp_i \hat \xi^i,  \, \\
& u^i = - \Kp_i (\hat \xi^i - \xi_o^i) + u_o^i, \, \Kp_i := \smat{5/4 & 1 & 0 & 0\\ 0 & 0 & 5/4 & 1}.
\end{aligned}
\end{equation}
With $\hat \xi = (\hat \xi^1,\dots, \hat \xi^4)$, $\hat y = (\hat y^1,\dots, \hat y^4)$, $\xi_o = (\xi_o^1,\dots, \xi_o^4)$ and $u_o=(u_o^1, \dots, u_o^4)$,  it is immediate to obtain from $\Ap_i$, $\Bp_i$, $\Cp_i$ in~\eqref{eq:mtx_i} and $\Lp_i$, $\Kp_i$ in~\eqref{eq:controller_i} the block matrices $\Ap$, $\Bp$, $\Cp$, $\Lp$, $\Kp$ in~\eqref{eq:controller}. The latter fully specify, in turn, the flow map in~\eqref{eq:hsCLflow} by~\eqref{eq:clMatricesPlants}. 
The flow and jump sets in~\eqref{eq:hsCLFlowSet}-\eqref{eq:hsCLJumpSet} are fully specified by $\rho_{o_1}=0.09$, $\rho_{o_2}=0.29$, $\rho_{o_3}=0.19$.

The control design in~\eqref{eq:hsCL} enforces that the output $y$ visits the regions of interest in the order prescribed by the LTL formula in~\eqref{eq:LTLformula}, as shown in Fig.~\ref{fig:path}. The depicted solution corresponds to one of the multiple evolutions encoded by the set-valued mapping $\GBAC$ in the table above, which has observation word $o_2 o_3 (o_1 o_2 o_3)^\omega$ (satisfying \eqref{eq:LTLformula}) and sequence of states $s_0 s_4 (s_5 s_6 s_4)^\omega$ (visiting infinitely often $\Sf$).

The evolution agrees with Lemma~\ref{lem:complSol} and Proposition~\ref{prop:OhRec}, and the acceptance condition of the BA is satisfied.

\section{Conclusions}
In this paper we have related the satisfaction of an LTL formula to the notion of recurrence for hybrid systems. We have first exemplified this relation on the BA corresponding to the LTL formula. Then, in order to address the realistic setting of output feedback, we have extended for open, unbounded sets some Lyapunov-like conditions for recurrence. In particular, one relaxed Lyapunov-like condition has allowed certifying recurrence of a suitable set for the designed hybrid system, formed from LTL formula and linear-time-invariant plant with output feedback. This guarantees satisfaction of the formula by assuming that the high-level controller given by the hybrid system is endowed with a low-level controller for obstacle avoidance, and provides a way for LTL synthesis without relying on discretizations of the plant.

\begin{figure}[!t]
\centering
\hspace*{1.2cm} \includegraphics[width=.75\columnwidth]{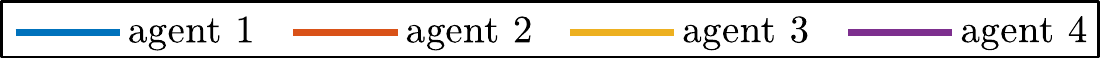}\\
\smallskip
\includegraphics[width=.9\columnwidth]{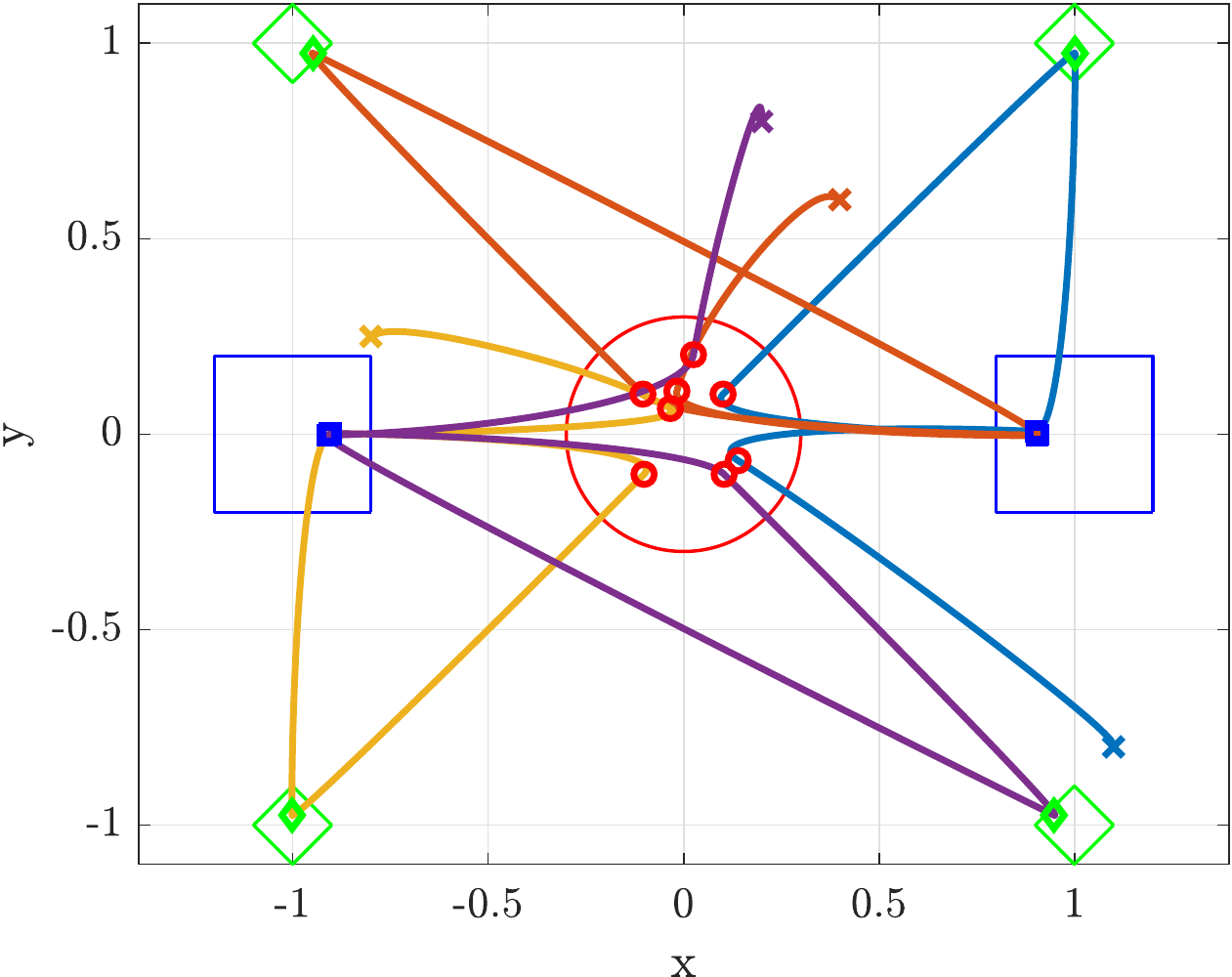}\\
\smallskip
\hspace*{.41cm} \includegraphics[width=.865\columnwidth]{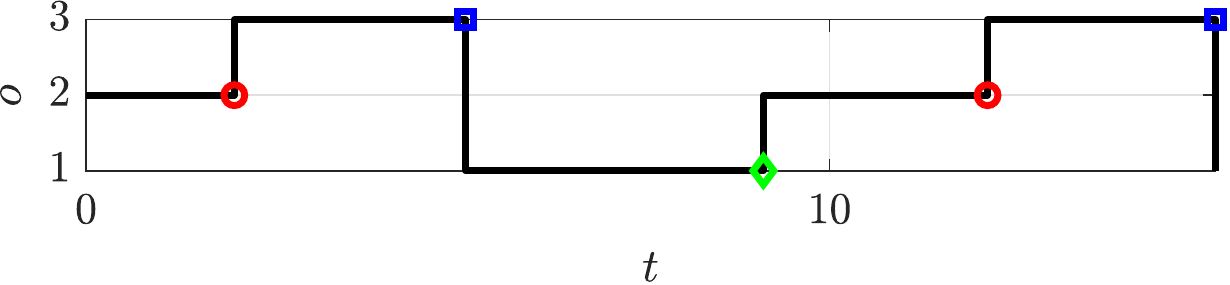} \vspace*{-.3cm}
\caption{
(Top) Evolution of the x and y positions of the robots together with the sets $Y_{o_k}$ of Fig.~\ref{fig:Oo}. Crosses indicate the initial conditions, and diamonds, circles, squares indicate the times when the output reaches $D_{o_1}$, $D_{o_2}$, $D_{o_3}$, respectively. (Bottom) Observations generated by the solution on the top, which are consistent with the order of diamonds, circles, squares encountered on such solution starting from its initial condition. \label{fig:path}}
\end{figure}

\appendix

\textsc{Proof of Lemma~\ref{lem:OCs nonempty}.} Consider $s \in S$ arbitrary in the rest of the proof. By the definition of $O_s$ in~\eqref{eq:Os}, $O_s \neq \emptyset$ in $\OC_s$ in~\eqref{eq:OC_s} by Assumption~\ref{ass:LTLfeas}, otherwise 
$s$ would be removed from $S$ in $\Au$. Then, for some $L \ge 1$, there exist $\delta'_1, \dots, \delta'_L$ all belonging to $\delta(s,O)$ by~\eqref{eq:Os}. 
First, consider $s \in \Sf$ in~\eqref{eq:deltac}. If $\min_{\df \in \delta(s,O)} d(\df,\Sf)= \infty$, there would not be a path from any $\df \in \delta(s,O)$ to any accepting state by~\eqref{eq:distOnAutomaton}, and such $s$ would have been removed from $\Au$. 
So, $\min_{\df \in \delta(s,O)} d(\df,\Sf) < \infty$ and any minimizer among $\delta'_1, \dots, \delta'_L$ is picked. 
Second, consider $s \notin \Sf$ in~\eqref{eq:deltac}. 
Suppose by contradiction that for all $i \in \{1, \dots, L\}$ above, $d(\delta'_i,\Sf) := \min_{\sf \in \Sf} \hat d (\delta'_i, \sf) \ge d(s, \Sf) := \min_{\sf \in \Sf} \hat d(s,\sf)$. 
For all $i \in \{ 1, \dots, L \}$ there exists a shortest path from $\delta'_i$ with $d(\delta'_i, \Sf) := \min_{\sf \in \Sf} \hat d(\delta'_i,\sf) = \hat d(\delta'_i,\sf^i) < \infty$ for some $\sf^i \in \Sf$ (by Assumption~\ref{ass:LTLfeas}) with $\sf^1, \dots, \sf^L$ not necessarily distinct. 
By using such shortest paths and the fact that $\delta'_1, \dots, \delta'_L$ are the only possible successors of $s$, $d(s,\Sf)= \min_{i \in \{1, \dots, L\}} \{ \hat d(\delta'_i,\sf^i) + 1\}=\min_{i \in \{1, \dots, L\}} \{ d(\delta'_i,\Sf) + 1\} \ge d(s,\Sf)+1$. This is a contradiction, hence for $s \notin \Sf$, there exists $k \in \{ 1, \dots , L \}$ such that $d(\delta'_k,\Sf) < d(s,\Sf)$. 
From first and second case, there exists $o \in O_s$ such that $\deltac(s,o) \neq \emptyset$.

\textsc{Proof of Lemma~\ref{lem:hbc-completeness-GBAC}.}
The nontrivial hybrid basic condition to check is $\DBAC \subset \dom \GBAC$, i.e., $\chi \in \DBAC$ implies $\GBAC(\chi) \neq \emptyset$. Indeed, $ \chi = \smat{s\\ o} \in \DBAC$ amounts to $s \in S$ and $o \in \OC_s$ by \eqref{eq:DBAC}. By Lemma~\ref{lem:OCs nonempty}, $\OC_s \neq \emptyset$, i.e., for some $o \in O_s$, $\deltac(s,o) \neq \emptyset$ by~\eqref{eq:OC_s} and there exists $s'\in \deltac(s,o)$. $\OC_{s'} \neq \emptyset$ again by Lemma~\ref{lem:OCs nonempty}, so there exists $o' \in \OC_{s'}$ and $\DBAC \subset \dom \GBAC$ holds. Moreover, $\GBAC(\DBAC) \subset \DBAC$ by construction, see \eqref{eq:GBAC}-\eqref{eq:DBAC}.

\textsc{Proof of Lemma~\ref{lem:recOBAC}.}
Define the Lyapunov function
\begin{equation}
\label{eq:VBA}
\VBA ( \Sso ) := d(s,\Sf),
\end{equation}
whose properties are proven in the next lemma.
\begin{lemma}
\label{lem:VBA}
$\VBA \colon \real^2 \to \real_{\ge 0}$ in~\eqref{eq:VBA} is continuous and there exist $\muBA > 0$ such that
\begin{equation}
\label{eq:Vdecrease}
\begin{aligned}
\VBA(g) - \VBA(\Sso) \le & -1 + \muBA \Ind_{\OBAC}(\Sso) \\
& \forall \Sso \in \DBAC, g \in \GBAC(\Sso).
\end{aligned}
\end{equation}
\end{lemma}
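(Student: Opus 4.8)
The plan is to exploit that $\VBA$ depends on its argument $\Sso$ only through the first component $s$, via the shortest-path distance $d(\cdot,\Sf)$, and to verify~\eqref{eq:Vdecrease} by a case split on whether $s$ is accepting. For continuity, I would first note that $s\mapsto d(s,\Sf)$ is defined on the finite set $S\subset\Zge$ of isolated integer points and is everywhere finite under Assumption~\ref{ass:LTLfeas}; extending it to any continuous function on $\real$ (e.g.\ by piecewise-linear interpolation between consecutive integers) yields a continuous $\VBA\colon\real^2\to\real_{\ge 0}$ that ignores $o$. This level of regularity is all that is needed here, since $\HyBAC$ in~\eqref{eq:HyGBAC} is a pure difference inclusion and no gradient of $\VBA$ is ever taken.

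For the decrease~\eqref{eq:Vdecrease}, fix $\Sso\in\DBAC$ (so $s\in S$, $o\in\OC_s$) and $g=(s',o')\in\GBAC(\Sso)$, which by~\eqref{eq:GBAC} means $s'\in\deltac(s,o)$. I would first settle the indicator: since $s$ is an integer, the point $\Sso$ lies in the inflated set $\OBAC$ of~\eqref{eq:OBAC} if and only if $s\in\Sf$. Indeed, if $s\in\Sf$ then taking $\sf=s$ and $r=0\in\Ba(0,\tfrac13)^\circ$ shows $\Sso\in\OBAC$; conversely, membership in $\OBAC$ forces $|s-\sf|<\tfrac13$ for some accepting integer $\sf$, which is impossible for distinct integers, so $s\notin\Sf$ gives $\Ind_{\OBAC}(\Sso)=0$.

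In the case $s\notin\Sf$ the indicator vanishes and I must show $\VBA(g)-\VBA(\Sso)\le -1$. By the first clause of the definition~\eqref{eq:deltac} of $\deltac$, every $s'\in\deltac(s,o)$ satisfies $d(s',\Sf)<d(s,\Sf)$; since both distances are nonnegative integers this gives $d(s',\Sf)\le d(s,\Sf)-1$, i.e.\ $\VBA(g)-\VBA(\Sso)\le-1$, as required. In the case $s\in\Sf$ the indicator equals $1$ and $\VBA(\Sso)=d(s,\Sf)=0$, while the second clause of~\eqref{eq:deltac} gives $d(s',\Sf)=\min_{\df\in\delta(s,O)}d(\df,\Sf)$, a finite nonnegative integer by Assumption~\ref{ass:LTLfeas}. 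I would then set
\[
\muBA:=1+\max_{\sf\in\Sf}\min_{\df\in\delta(\sf,O)}d(\df,\Sf),
\]
which is finite and strictly positive because $S$ is finite, so that $\VBA(g)-\VBA(\Sso)=d(s',\Sf)\le\muBA-1=-1+\muBA\,\Ind_{\OBAC}(\Sso)$. This constant is uniform over all $s\in S$, closing both cases.

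I expect the only genuinely non-routine point to be the bookkeeping around the accepting case: when $s\in\Sf$ the distance resets from $0$ to a possibly positive value, so the slack $\muBA\,\Ind_{\OBAC}$ is precisely what absorbs this one-step increase, and one must check that a single constant $\muBA$ works simultaneously for every accepting state. The remaining ingredients---identifying when $\Sso\in\OBAC$ using that the inflation radius $\tfrac13$ is below $\tfrac12$, and turning the strict inequality $d(s',\Sf)<d(s,\Sf)$ into a decrease by at least $1$---are straightforward consequences of integrality of the distances.
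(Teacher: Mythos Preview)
Your proposal is correct and follows essentially the same approach as the paper: a case split on whether $s\in\Sf$ (equivalently $\Sso\in\OBAC$), using integrality of $d$ for the strict-decrease case and absorbing the reset in the accepting case with a uniform constant. The paper simply takes the coarser $\muBA:=1+\dMAX$ with $\dMAX:=\max_{s\in S}d(s,\Sf)$ rather than your tighter $1+\max_{\sf\in\Sf}\min_{\df\in\delta(\sf,O)}d(\df,\Sf)$, and glosses over the continuity extension you spell out explicitly.
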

\begin{proof}
Note that $d$ takes integer values. The proof is straightforward by considering separately the cases $\Sso \in \DBAC \backslash \OBAC$ and $\Sso \in \DBAC \cap \OBAC$. For the latter, define
\begin{equation}
\label{eq:dMAX}
\dMAX := \max_{s \in S} d(s,\Sf) < \infty,
\end{equation}
due to Assumption~\ref{ass:LTLfeas}, and use $\muBA := 1 + \dMAX$.
\end{proof}

Since $\HyBAC$ satisfies the hybrid basic conditions by Lemma~\ref{lem:hbc-completeness-GBAC} and $\VBA$ has the properties in Lemma~\ref{lem:VBA} (see \cite[\S 6.2]{subbaraman2016equivalence}), $\OBAC$ is globally recurrent for $\HyBAC$ by~\cite[Thm.~5]{subbaraman2016equivalence} and also \UlyGR, being open and bounded \cite[Prop.~1]{subbaraman2016equivalence}.

\textsc{Proof of Lemma~\ref{lem:recOnotBounded}.}
We prove the two items of Definition~\ref{def:GRandUGR} for \UGR of $\O$ for $\Hy$. 
\newline
\noindent \textit{Item (i) of Definition~\ref{def:GRandUGR} (no finite escape times for~\eqref{eq:hsFlow}).}\newline
Suppose by contradiction that there exists a solution $\phi$ to~\eqref{eq:hsFlow} such that $\bar T:=\sup_t \dom \phi < + \infty$, $\dom \phi=[0,\bar T) \times \{0\}$, and $\lim_{t\to\bar T} |\phi(t,0)|=+\infty$. For such a solution, it holds
\begin{equation}
\label{eq:ValongFiniteEscapeTimes}
V(\phi(t,0)) - V(\phi(0,0)) = \int_0^t\frac{d}{d\tau} V(\phi(\tau,0)) d\tau.
\end{equation}
By taking the limits of the left and right hand sides for $t\to \bar T$, the former diverges to $+\infty$ due to the radial unboundedness of $V$, whereas the latter is upper bounded by $(-1+\mu) t \le |-1+\mu| \bar T$ due to $\mu>0$ in~\eqref{eq:LyapFunRecFlow}. So, such $\phi$ cannot exist.

\noindent\textit{Preliminaries for item (ii) of Definition~\ref{def:GRandUGR}.}\newline
Consider the restriction $\HyR{\RnmO}$ of the hybrid system $\Hy$ as in~\eqref{eq:hsRestr} with the further definitions of flow set $C \cap (\RnmO) =: \Cr$ and jump set $D \cap (\RnmO) =: \Dr$. $\HyR{\RnmO}$ still satisfies Assumption~\ref{ass:hbc} due to $\O$ being open. Let $\phir$ be an arbitrary maximal solution to $\HyR{\RnmO}$. We can consider $\phi(0,0) \in K \cap (\Cr \cup \Dr)$ without loss of generality. Indeed, if $\phi(0,0) \in K\backslash(\Cr \cup \Dr) \subset (K \backslash (C \cup D)) \cup (K \cap \O)$, \UGR is trivially satisfied because $\phi$ has only one point ($\phi \notin C \cup D$) or $\phi(0,0)$ is already in $\O$. For this same argument, item (ii) holds if we prove that to each arbitrary maximal solution $\phir$ to $\HyR{\RnmO}$, it corresponds a solution $\phi$ to $\Hy$ that satisfies item (ii), as we do in the rest of the proof.

\noindent\textit{Item (ii) of Definition~\ref{def:GRandUGR} (uniform times from compact sets).}\newline
By analogous steps to the proof of~\cite[Thm.~3.18]{goebel2012hybrid}, we integrate $V(\phir(\cdot))$ over each interval of flow using~\eqref{eq:LyapFunRecFlow} and compute its increment across each jump using~\eqref{eq:LyapFunRecJump} to obtain
\begin{equation}
\label{eq:boundValongSol}
V(\phir(t,j)) \le V(\phir(0,0)) - (t+j).
\end{equation}
Define the real number
$
\Vu := \sup_{x\in K} V(x) = \max_{x \in K} V(x) \ge 0,
$
where $\Vu \ge 0$ follows from $V$ being smooth and radially unbounded. 
For $(\bar T, \bar J) := \sup \dom \phir$, it must hold $\bar T + \bar J < \hat T := \Vu +1$, in order not to contradict the nonnegativity of $V$ through~\eqref{eq:boundValongSol}. Note that $\hat T$ is uniform over the set $K$.

Let $\xi := \phir(\bar T, \bar J)$ and recall that $\phir$ is an arbitrary maximal solution to $\HyR{\RnmO}$. We have excluded finite escape times in the previous item~(i). $\xi$ cannot belong to $\Cr^\circ$ because $\Cr \subset C \subset \dom F$ by Assumption~\ref{ass:hbc} (indeed, if $\xi \in \Cr^\circ$, there would exist a neighbourhood of $\xi$ such that the tangent cone to $\Cr$ at each point $\xi'$ \cite[Def.~5.12]{goebel2012hybrid} of a neighbourhood of $\xi$ would be $\real^n$ and the intersection with $F(\xi')$ would be nonempty, hence $\phir$ could be extended and would not be maximal \cite[Lemma~5.26(b)]{goebel2012hybrid}). 
Moreover, $\xi \notin \Dr$, otherwise $\phir$ could be extended through a jump. Therefore, $\xi \in \partial \Cr\backslash \Dr$, or, through a jump, $\xi \in \real^n\backslash(\Cr \cup \Dr)$. 
In both cases, when we consider a solution $\phi$ to $\Hy$ with ``initial condition'' $\phi(\bar T, \bar J)= \xi$, such a solution $\phi$ must evolve for some hybrid time in $\real^n\backslash( \Cr \cup \Dr)$ if it evolves, otherwise $\phir$ could be extended from $\xi$ and would not be maximal. 
Since $\real^n\backslash( \Cr \cup \Dr)=\big(\real^n\backslash( C \cup D)\big) \cup \O$, $\phi$ terminates or is in $\O$, which are the two cases of item~(ii) to be shown. Indeed, $\hat T$, which is uniform over the set $K$, gives
$T$ in item~(ii).

\textsc{Proof of Proposition~\ref{prop:recOnotBoundedRelaxed}.}
We follow the proof of Lemma~\ref{lem:recOnotBounded}.\newline
\noindent \textit{Item (i) of Definition~\ref{def:GRandUGR} (no finite escape times for~\eqref{eq:hsFlow}).}\newline
Suppose by contradiction that there exists a solution $\phi$ to~\eqref{eq:hs} such that $\bar T := \sup_t \dom \phi < + \infty$, $\dom \phi = [0,\bar T) \times \{ 0 \}$, and $\lim_{t\to\bar T} |\phi(t,0)| = +\infty$. Note that, due to $\mu >0$ in~\eqref{eq:LyapFunRecFlowRelaxed},
\begin{equation}
\label{eq:flowForFiniteEscapeTimes}
\langle \nabla V(x), f \rangle \le \lambdac V(x) + \mu  \quad \forall x \in C, f \in F(x).
\end{equation}
By defining $t \mapsto v(t) := V(\phi(t,0))$, \eqref{eq:flowForFiniteEscapeTimes} implies that
\begin{equation}
\label{eq:gronwBell}
\dot v(t) \le \lambdac v(t) + \mu.
\end{equation}
By the comparison lemma \cite[Lemma~3.4]{khalil2002nonlinear}, \eqref{eq:gronwBell} implies that
\begin{equation*}
\begin{aligned}
& v(t) \le e^{\lambdac t} v(0) + \tfrac{\mu}{\lambdac} (e^{\lambdac t} - 1) \quad \Longleftrightarrow \\
& V(\phi(t,0)) \le e^{\lambdac t} V(\phi(0,0)) + \tfrac{\mu}{\lambdac} (e^{\lambdac t} - 1) \quad \forall (t,0)\!\in\! \dom \phi.
\end{aligned}
\end{equation*}
By taking the limits of the left- and right-hand sides for $t \to \bar T$, the former diverges to $+\infty$ due to the radial unboundedness of $V$, whereas the latter is finite. So, such $\phi$ cannot exist.

\noindent\textit{Item (ii) of Definition~\ref{def:GRandUGR} (uniform times from compact sets).}\newline
By analogous steps to the proof of~\cite[Thm.~3.18]{goebel2012hybrid}, we integrate $V(\phir(\cdot))$ over each interval of flow using~\eqref{eq:LyapFunRecFlowRelaxed} and compute its increment across each jump using~\eqref{eq:LyapFunRecJumpRelaxed} to obtain
\begin{equation}
\label{eq:boundValongSolRelax}
V(\phir(t,j)) \le e^{\lambdac t + \lambdad j } V(\phir(0,0)).
\end{equation}
Define the real numbers
\begin{subequations}
\label{eq:boundsV}
\begin{align}
\Vu & := \sup_{x\in K \cap (\Cr \cup \Dr)} V(x) = \max_{x \in K \cap (\Cr \cup \Dr)} V(x) > 0, \label{eq:ubV}\\
\Vl & := \inf_{x\in \Cr \cup \Dr} V(x) = \min_{x \in \Cr \cup \Dr} V(x) > 0, \label{eq:lbV}
\end{align}
\end{subequations}
where $\Vu  \ge \Vl > 0$ follows from $V$ being smooth, radially unbounded, strictly positive on $(C \cup D)\backslash \O$ and $\Cr \cup \Dr$ being a closed set. For $(\bar T, \bar J) := \sup \dom \phir$, it must hold
\begin{equation}
\label{eq:TRecRelax}
\bar T + \bar J \le \hat T := \big(M+\log(\Vu)-\log(\Vl)\big)/\gamma>0.
\end{equation}
Indeed, if \eqref{eq:TRecRelax} was \emph{not} true, i.e., $\bar T + \bar J > \hat T$, we would have
\begin{equation}
\begin{aligned}
& V(\phir(\bar T,\bar J)) \overset{\eqref{eq:boundValongSolRelax}}{\le} e^{\lambdac \bar T + \lambdad \bar J } V(\phir(0,0))\\ & \overset{\eqref{eq:condOn_tj},\, \eqref{eq:ubV}}{\le} e^{M-\gamma(\bar T+ \bar J)} \Vu < e^{M-\gamma \hat T} \Vu \overset{\eqref{eq:TRecRelax}}{=} \Vl,
\end{aligned}
\end{equation}
which is a contradiction (each inequality is obtained thanks to the relationships reported over it). Note that due to~\eqref{eq:TRecRelax} and \eqref{eq:boundsV}, $\hat T$ is uniform over the set $K$. By the same reasoning at the end of the proof of Lemma~\ref{lem:recOnotBounded}, $\hat T$ is then such that item~(ii) of Definition~\ref{def:GRandUGR} is satisfied.

\textsc{Proof of Lemma~\ref{lem:complSol}.}
We apply \cite[Prop.~6.10]{goebel2012hybrid}. Its assumptions are verified since \eqref{eq:hsCL} satisfies both Assumption~\ref{ass:hbc} and a viability condition for each point in $\Ch \backslash  \Dh$. Since finite escape times cannot occur with flow map $\fh$ in~\eqref{eq:hsCLflow}, and $\Gh(\Dh) \! \subset\! \Ch \cup \Dh$, maximal solutions can only be complete. 
Suppose by contradiction that $\sup_j \dom \phi < + \infty$, so this solution $\phi$ stops jumping and $o$ does not change. Since $y_o \! \in \! Y_o$, $\Ap - \Bp \Kp$ and $\Ap - \Lp \Cp$ are Hurwitz and $\xi = \hat \xi = \xi_o$  is the only equilibrium of~\eqref{eq:hsGivenObservFlow} (see \eqref{eq:physFlowInTildeZeta}), $\phi$ also stops flowing at a finite time $t$, which contradicts its completeness we just proved.

\textsc{Proof of Proposition~\ref{prop:OhRec}.}
We verify that the assumptions of Proposition~\ref{prop:recOnotBoundedRelaxed} hold, and this concludes \UGR of $\Oh$ in~\eqref{eq:Oh}. \eqref{eq:hsCL} satisfies Assumption~\ref{ass:hbc} and we take $\Vh$ in~\eqref{eq:Lf} below as the Lyapunov function used in Proposition~\ref{prop:recOnotBoundedRelaxed}. Note that $\dot \zeta = \Fp \zeta + \gp$ in~\eqref{eq:hsCLflow} can be written in the error variables
\begin{equation}
\label{eq:physFlowInTildeZeta}
\tilde \zeta := \smat{\xi - \xi_o\\ \xi - \hat \xi} \text{ as }
\dot{\tilde \zeta} = \bmat{ \Ap - \Bp \Kp & \Bp \Kp\\ 0 & \Ap - \Lp \Cp} \tilde{\zeta} =: \tilde{\Fp} \tilde{\zeta}.
\end{equation}
Under Assumption~\ref{ass:plant}, $\Ap - \Bp \Kp$ and $\Ap - \Lp \Cp$ are selected Hurwitz, so that $\tilde{\Fp}$ is Hurwitz as well.  For $\Qp = \Qp^T >0$, $\Pp = \Pp^T>0$ is then the unique solution to the Lyapunov equation
$
\Pp \tilde{\Fp} + \tilde{\Fp}^T \Pp = - \Qp.
$
\begin{equation}
\label{eq:Lf_zeta}
W(\zeta,o) := \smat{\xi - \xi_o\\ \xi - \hat \xi}^T \Pp \smat{\xi - \xi_o\\ \xi - \hat \xi} = \tilde \zeta^T \Pp \tilde \zeta
\end{equation}
is a Lyapunov function for the point $\zeta =( \xi_o, \xi_o)$. Define the quantities in the next table, where the generic $\xi_o$ is as in~\eqref{eq:steady state},  $\lambda_\textup{min}[\cdot]$ and $\lambda_\textup{MAX}[\cdot]$ denote the minimum and maximum eigenvalue of the argument matrix, and $\dMAX$ is as in~\eqref{eq:dMAX}.
\begin{table}[h]
\centering
\begin{tabular}{c}
\toprule
Definition of quantities used in the rest of the proof\\
\midrule
{
\begin{minipage}{0.935\linewidth}
\vspace*{-.23cm}
\begin{align}
\wmin & := \min_{o\in O,\,(\xi, \hat \xi) \in C_o}  \smat{\xi - \xi_o\\ \xi - \hat \xi}^T \Pp \smat{\xi - \xi_o\\ \xi - \hat \xi} > 0 \label{eq:wmin}\\
J_1 & := \max_{o,\,o' \in O} \smat{\xi_o - \xi_{o'}\\0}^T \Pp \smat{\xi_o - \xi_{o'}\\0} > 0 \label{eq:J_1}\\
\lambda'& :=\lambda_\textup{min}[\Qp]/\lambda_\textup{MAX}[\Pp]>0 \label{eq:lambdaPrime}\\
\lambda & \in \Big( \frac{1-\theta}{\theta} \frac{\dMAX}{\wmin},\frac{1}{J_1} \Big) \text{ for } \theta \in (0,1) \label{eq:lambda}
\end{align}
\end{minipage}
}\\
\bottomrule
\end{tabular}
\end{table}
\newline For these quantities we note: \textit{(i)}~$\wmin>0$ since $\Pp>0$ and $\wmin =0$ would imply, for some $o$, $y=y_o$, which is impossible for $(\xi, \hat \xi) \in C_o$; \textit{(ii)}~$J_1>0$ since $J_1=0$ would imply $\xi_o = \xi_{o'}$ for $o \neq o'$ and, in turn, $y_o = y_{o'}$ for $o \neq o'$, which is impossible by the disjointness of $Y_o$ and $Y_{o'}$ in Assumption~\ref{ass:Yo open}; \textit{(iii)}~$\lambda'>0$ follows from $\Pp>0$ and $\Qp>0$; \textit{(iv)}~the interval for $\lambda$ is well-defined because $1/J_1>0$, and for the given $\dMAX/\wmin>0$, we can always select $\theta\in (0,1)$ close enough to $1$ 
so that $0 < \frac{1-\theta}{\theta} \frac{\dMAX}{\wmin} < \frac{1}{J_1}$. With $\VBA$ in \eqref{eq:VBA}, $\lambda>0$ in~\eqref{eq:lambda}, and $W$ in~\eqref{eq:Lf_zeta}, the Lyapunov function is
\begin{equation}
\label{eq:Lf}
\Vh(\xh) := \VBA(\chi) + \lambda W(\zeta,o).
\end{equation}
$\Vh$ is smooth ($\VBA$ and $W$ are smooth), radially unbounded relative to $\Ch \cup \Dh$ and strictly positive on $(\Ch \cup \Dh)\backslash \Oh$ (since $\VBA$ vanishes only for $s\in \Sf$ and $W$ only for $\xi\! =\! \hat \xi\!=\! \xi_o$).

\noindent\textit{Flow condition~\eqref{eq:LyapFunRecFlowRelaxed}.}
We have that for each $\xh \in \Ch$
\begin{equation}
\label{eq:VhDot1}
\begin{aligned}
& \langle \nabla \Vh (\xh), \fh(\xh) \rangle  = - \lambda \tilde \zeta^T \Qp \tilde \zeta \! \le  \!  - \lambda \lambda' \tilde \zeta^T \Pp \tilde \zeta\\
& \quad = - \lambda' (1-\theta) \Vh(\xh) - \lambda' \big( \theta \Vh(\xh) - \VBA(\chi) \big),
\end{aligned}
\end{equation}
where the first equality follows from \eqref{eq:hsCLflow}, \eqref{eq:Lf_zeta} and \eqref{eq:physFlowInTildeZeta} (where $\tilde \zeta$ is defined), the inequality from~\eqref{eq:lambdaPrime}, and the second equality from simple computations. 
\eqref{eq:Lf},$\,$\eqref{eq:wmin} and $\VBA(\chi) \!\le\! \dMAX$ (for each $\chi$) imply $\big( \theta \Vh(\xh) \!-\! \VBA(\chi) \big) \!\ge\! \theta \lambda \wmin \!-\! (1-\theta) \dMAX\!>\!0$ due to the lower bound of $\lambda$ in~\eqref{eq:lambda}.
So, 
\begin{equation}
\label{eq:VhDot2}
\langle \nabla \Vh (\xh), \fh(\xh) \rangle \! \le \! - \lambda' (1\!-\!\theta) \Vh(\xh)=: \! \lambdac \Vh(\xh)
\end{equation}
proves \eqref{eq:LyapFunRecFlowRelaxed} with $\lambdac<0$ (recall $\theta \in (0,1)$ in~\eqref{eq:lambda}).

\noindent\textit{Jump condition~\eqref{eq:LyapFunRecJumpRelaxed}.}
We use in this step that, for each $a$, $b$, and $\Pp=\Pp^T >0$ of compatible dimensions,
\begin{equation}
\label{eq:boundPosDefP}
(a+b)^T\Pp(a+b) \le 2 a^T \Pp a + 2 b^T \Pp b.
\end{equation}
For $\xh^+=(\chi^+,\zeta^+)=(s^+,o^+,\xi^+,\hat \xi^+)=(s^+,o^+,\xi,\hat \xi)$,
\begin{equation}
\label{eq:VhPlusGen}
\begin{aligned}
\Vh(\xh^+) &=\VBA(\chi^+) + \lambda W(\zeta,o^+)  \\
& \le \VBA(\chi^+) + \lambda \big( 2 W(\zeta,o) + 2 J_1\big)
\end{aligned}
\end{equation}
by~\eqref{eq:boundPosDefP} and then \eqref{eq:J_1}. In the case $\xh \in \Dh \backslash \Oh$, $s \notin \Sf$, $\VBA(\chi) \ge 1$ and $\VBA(\chi^+) \le \VBA(\chi) -1$, so that \eqref{eq:VhPlusGen} becomes
\begin{equation}
\label{eq:VhPlusNotOh}
\Vh(\xh^+)\! \le\! 2 \big(\VBA(\chi)\! +\! \lambda W(\zeta,o) \big) \!+\! 2 \lambda J_1 \!-\!1 \!-\! \VBA(\chi) \! \le \! 2 \Vh(\xh),
\end{equation}
where $\VBA(\chi) \ge 1$ and the upper bound of $\lambda$ in~\eqref{eq:lambda} yield the second inequality. In the case $\xh \in \Dh \cap  \Oh$, $s \in \Sf$ and $\VBA(\chi) = 0$, so that \eqref{eq:VhPlusGen} becomes
\begin{equation}
\label{eq:VhPlusOh}
\Vh(\xh^+)\! \le\! 2 \VBA(\chi) + \dMAX + \lambda \big( 2 W\!(\zeta,o) + 2 J_1\big) \! =:\! 2 \Vh(\xh) + \muh.
\end{equation}
By \eqref{eq:VhPlusNotOh} and \eqref{eq:VhPlusOh}, \eqref{eq:LyapFunRecJumpRelaxed} is proven with $\lambdad = \log 2$ and $\mu=\muh$.

\noindent\textit{Restriction $\HyR{\real^{2(\nu +1)}\backslash \Oh}$ and \eqref{eq:condOn_tj}.}
By construction of the jump map for $s$ in~\eqref{eq:hsCLjump} (see \eqref{eq:GBAC} and \eqref{eq:deltac}), the distance $d$ to the accepting states decreases at each jump by at least $1$ (since $\Dh \cap (\real^{2(\nu +1)}\backslash \Oh)$ excludes $s \in \Sf$) and is upper bounded by $\dMAX$. Then, each solution $\phir$ to $\HyR{\real^{2(\nu +1)}\backslash \Oh}$ with $(t,j) \in \dom \phir$, has $j \le \dMAX$. From the steps above for flow/jump we have $\lambdac<0$ and $\lambdad >0$. Define $M:= (\lambdad - \lambdac) \dMAX>0$ and $\gamma:= - \lambdac >0$. Then, \eqref{eq:condOn_tj} is proven by
\begin{equation}
\lambdac t + \lambdad j \le \lambdac (t + j) + (\lambdad - \lambdac) \dMAX = M - \gamma (t+j).
\end{equation}

\bibliographystyle{IEEEtranS}
\bibliography{IEEEabrv,refs}

\begin{thebibliography}{10}
\providecommand{\url}[1]{#1}
\csname url@samestyle\endcsname
\providecommand{\newblock}{\relax}
\providecommand{\bibinfo}[2]{#2}
\providecommand{\BIBentrySTDinterwordspacing}{\spaceskip=0pt\relax}
\providecommand{\BIBentryALTinterwordstretchfactor}{4}
\providecommand{\BIBentryALTinterwordspacing}{\spaceskip=\fontdimen2\font plus
\BIBentryALTinterwordstretchfactor\fontdimen3\font minus
  \fontdimen4\font\relax}
\providecommand{\BIBforeignlanguage}[2]{{%
\expandafter\ifx\csname l@#1\endcsname\relax
\typeout{** WARNING: IEEEtranS.bst: No hyphenation pattern has been}%
\typeout{** loaded for the language `#1'. Using the pattern for}%
\typeout{** the default language instead.}%
\else
\language=\csname l@#1\endcsname
\fi
#2}}
\providecommand{\BIBdecl}{\relax}
\BIBdecl

\bibitem{baier2008principles}
C.~Baier and J.-P. Katoen, \emph{Principles of model checking}.\hskip 1em plus
  0.5em minus 0.4em\relax MIT press, 2008.

\bibitem{belta2017formal}
C.~Belta, B.~Yordanov, and E.~A. Gol, \emph{Formal Methods for Discrete-Time
  Dynamical Systems}.\hskip 1em plus 0.5em minus 0.4em\relax Springer, 2017.

\bibitem{berkane2019hybrid}
S.~Berkane, A.~Bisoffi, and D.~V. Dimarogonas, ``A hybrid controller for
  obstacle avoidance in an n-dimensional {E}uclidean space,'' in \emph{European
  Contr. Conf.}, 2019, pp. 764--769.

\bibitem{bisoffi2018hybrid}
A.~Bisoffi and D.~V. Dimarogonas, ``A hybrid barrier certificate approach to
  satisfy linear temporal logic specifications,'' in \emph{Amer. Control
  Conf.}, 2018.

\bibitem{braun2018explicit}
P.~Braun, C.~M. Kellett, and L.~Zaccarian, ``Explicit construction of robust
  avoidance controllers for linear systems,'' 2018, available from
  \url{https://hal.archives-ouvertes.fr/hal-01893027}, accessed on 23/12/2018.

\bibitem{braun2018unsafe}
------, ``Unsafe point avoidance in linear state feedback,'' in \emph{IEEE
  Conf. on Decision and Control}, 2018, pp. 2372--2377.

\bibitem{cormen2009introduction}
T.~H. Cormen, C.~E. Leiserson, R.~L. Rivest, and C.~Stein, \emph{Introduction
  to Algorithms, Third Edition}.\hskip 1em plus 0.5em minus 0.4em\relax MIT
  Press, 2009.

\bibitem{fiacchini2018stabilization}
M.~Fiacchini, M.~Jungers, and A.~Girard, ``Stabilization and control {L}yapunov
  functions for language constrained discrete-time switched linear systems,''
  \emph{Automatica}, vol.~93, pp. 64--74, 2018.

\bibitem{gastin2001fast}
P.~Gastin and D.~Oddoux, ``Fast {LTL} to {B}{\"u}chi automata translation,'' in
  \emph{Int. Conf. on Computer Aided Verification}.\hskip 1em plus 0.5em minus
  0.4em\relax Springer, 2001, pp. 53--65.

\bibitem{goebel2012hybrid}
R.~Goebel, R.~G. Sanfelice, and A.~R. Teel, \emph{Hybrid Dynamical Systems:
  modeling, stability, and robustness}.\hskip 1em plus 0.5em minus 0.4em\relax
  Princeton University Press, 2012.

\bibitem{guo2016communication}
M.~Guo, J.~Tumova, and D.~V. Dimarogonas, ``Communication-free multi-agent
  control under local temporal tasks and relative-distance constraints,''
  \emph{IEEE Trans. Automat. Contr.}, vol.~61, no.~12, pp. 3948--3962, 2016.

\bibitem{han2018sufficient}
H.~Han and R.~G. Sanfelice, ``Sufficient conditions for temporal logic
  specifications in hybrid dynamical systems,'' \emph{IFAC-PapersOnLine},
  vol.~51, no.~16, pp. 97--102, 2018.

\bibitem{hespanha09}
J.~P. Hespanha, \emph{Linear Systems Theory}.\hskip 1em plus 0.5em minus
  0.4em\relax Princeton Univ. Press, 2009.

\bibitem{khalil2002nonlinear}
H.~K. Khalil, \emph{Nonlinear Systems}, 3rd~ed.\hskip 1em plus 0.5em minus
  0.4em\relax Prentice Hall, 2002.

\bibitem{subbaraman2016equivalence}
A.~Subbaraman and A.~R. Teel, ``On the equivalence between global recurrence
  and the existence of a smooth {L}yapunov function for hybrid systems,''
  \emph{Systems \& Control Letters}, vol.~88, pp. 54--61, 2016.

\bibitem{tabuada2009verification}
P.~Tabuada, \emph{Verification and control of hybrid systems: a symbolic
  approach}.\hskip 1em plus 0.5em minus 0.4em\relax Springer, 2009.

\bibitem{teel2013lyapunov}
A.~R. Teel, ``Lyapunov conditions certifying stability and recurrence for a
  class of stochastic hybrid systems,'' \emph{Annual Reviews in Control},
  vol.~37, no.~1, pp. 1--24, 2013.

\end{thebibliography}

\end{document}